\theoremstyle{plain} %default
\newtheorem{thm}{Theorem}
\newtheorem{lemma}{Lemma}
\newtheorem{prop}{Proposition}
\theoremstyle{definition}
\newtheorem{remark}{Remark}
\newcommand{\bxi} {\boldsymbol{\xi}}
\newcommand{\blam}{{\boldsymbol \lambda}}
\newcommand{\bW}{{\bf W}}
\newcommand{\bH}{{\bf H}}
\newcommand{\bR}{{\bf R}}
\newcommand{\bK}{{\bf K}}
\newcommand{\bI}{{\bf I}}
\newcommand{\bM}{{\bf M}}
\newcommand{\bA}{{\bf A}}
\newcommand{\bB}{{\bf B}}
\newcommand{\bC}{{\bf C}}
\newcommand{\bD}{{\bf D}}
\newcommand{\bQ}{{\bf Q}}
\newcommand{\bZ}{{\bf Z}}
\newcommand{\bS}{{\bf S}}
\newcommand{\bT}{{\bf T}}
\newcommand{\bX}{{\bf X}}
\newcommand{\bLam}{{\bf \Lambda}}
\newcommand{\by}{{\bf y}}
\newcommand{\bx}{{\bf x}}
\newcommand{\bdz}{\Delta{\bf z}}
\newcommand{\bdw}{\Delta{\bf w}}
\newcommand{\bdlam}{\Delta \blam}
\newcommand{\ba}{{\bf a}}
\newcommand{\bb}{{\bf b}}
\newcommand{\br}{{\bf r}}
\newcommand{\bz}{{\bf z}}
\newcommand{\bo}{{\bf 0}}
\newcommand{\bk}{{\bf k}}
\newcommand{\bw}{{\bf w}}
\newcommand{\sS}{\mathcal{S}}
\newcommand{\bp} {\begin{proof}}
\newcommand{\ep} {\end{proof}}
\newcommand{{\Rb}} {\right)}
\newcommand{{\Rf}} {\right\}}
\def\bal#1\eal{\begin{align}#1\end{align}}
\begin{document}

%\newcommand{\yellow}[1]{\colorbox{yellow}{#1}}
%\newcommand{\yellow}[1]{{#1}}
%----------------------------------------------------------------------
% Title Information, Abstract and Keywords
%----------------------------------------------------------------------
\title{An Algorithm for Global Maximization of Secrecy Rates in Gaussian MIMO  Wiretap Channels}

%% author names and affiliations
% use a multiple column layout for up to three different
% affiliations
\author{Sergey Loyka, Charalambos D. Charalambous
%\vspace*{-1.5\baselineskip}
%
\thanks{S. Loyka is with the School of Electrical Engineering and Computer Science, University of Ottawa, Ontario,
Canada, K1N 6N5, e-mail: sergey.loyka@ieee.org.}

\thanks{C.D. Charalambous is with the ECE Department, University of Cyprus, 75 Kallipoleos Avenue, P.O. Box 20537, Nicosia, 1678, Cyprus, e-mail: chadcha@ucy.ac.cy}
}
%
% make the title
\maketitle
%
% do the abstract

\begin{abstract}
Optimal signaling for secrecy rate maximization in Gaussian MIMO wiretap channels is considered. While this channel has attracted a significant attention recently and a number of results have been obtained, including the proof of the optimality of Gaussian signalling, an optimal transmit covariance matrix is known for some special cases only and the general case remains an open problem. An iterative custom-made algorithm to find a globally-optimal transmit covariance matrix in the general case is developed in this paper, with guaranteed convergence to a \textit{global} optimum. While the original optimization problem is not convex and hence difficult to solve, its minimax reformulation can be solved via the convex optimization tools, which is exploited here. The proposed algorithm is based on the barrier method extended to deal with a minimax problem at hand. Its convergence to a global optimum is proved for the general case (degraded or not) and a bound for the optimality gap is given for each step of the barrier method. The performance of the algorithm is demonstrated via numerical examples. In particular, 20 to 40 Newton steps are already sufficient to solve the sufficient optimality conditions with very high precision (up to the machine precision level), even for large systems. Even fewer steps are required if the secrecy capacity is the only quantity of interest. The algorithm can be significantly simplified for the degraded channel case and can also be adopted to include the per-antenna power constraints (instead or in addition to the total power constraint). It also solves the dual problem of minimizing the total power subject to the secrecy rate constraint.
\end{abstract}

%
% start the main text ...
%\vspace*{-0.5\baselineskip}
\section{Introduction}
\label{sec:introduction}

Wide-spread use of wireless systems has initiated significant interest in their security and related information-theoretic studies \cite{Bloch}. Secrecy capacity has emerged as a key performance metric, which extends the regular channel capacity to accommodate the secrecy requirement. Wyner's wire-tap channel (WTC) \cite{Bloch}-\cite{Massey} is the most popular model to accommodate secrecy, which was extended to the Gaussian channel \cite{Cheong} and subsequently to the Gaussian multiple-input multiple-output (MIMO) setting \cite{Khisti-1}-\cite{Liu}; the reader is referred to \cite{Bloch} for a detailed discussion of this model and extensive literature review. The Gaussian MIMO WTC has been recently a subject of intense study and a number of results have been obtained, including the proof of optimality of Gaussian signaling \cite{Bloch}, \cite{Khisti-1}-\cite{Liu}. While the functional form of the optimal (capacity-achieving) distribution has been established, significantly less is known about its optimal covariance matrix (the only remaining parameter to completely characterize the distribution since the mean is always zero).

The optimal transmit covariance matrix under the total power constraint has been obtained for some special cases, e.g. low/high SNR, multiple-input single-output (MISO) channels, full-rank, rank-1 or weak eavesdropper cases, or the parallel channel \cite{Khisti-1}-\cite{Gursoy}, but the general case remains illusive. The main difficulty lies in the fact that the underlying optimization problem is in general not a convex problem. It was conjectured in \cite{Oggier} and proved in \cite{Khisti-2} using an indirect approach (via the degraded channel) that the optimal signaling is on the positive directions of the difference channel (where the legitimate channel is stronger than the eavesdropper one).  A direct proof based on the necessary Karush-Kuhn-Tucker (KKT) optimality conditions has been obtained in \cite{Loyka}. A weaker form of this result (non-negative instead of positive directions) has been obtained earlier in \cite{Li}. In the general case, the rank of an optimal covariance matrix does not exceed the number of positive eigenvalues of the difference channel matrix \cite{Loyka}. An exact full-rank solution for the optimal covariance has been obtained in \cite{Loyka} and its properties have been characterized. In particular, unlike the regular channel (no eavesdropper), the optimal power allocation does not converge to uniform one at high SNR and the latter remains sub-optimal at any finite SNR. In the case of weak eavesdropper (its singular values are much smaller than those of the legitimate channel), the optimal signaling mimics the conventional one (water-filling over the channel eigenmodes) with an adjustment for the eavesdropper channel. The rank-one solution in combination with the full-rank one provides a complete solution for the case of two transmit antennas and any number of receive/eavesdropper antennas. The 2-2-1 case (2 transmit, 2 receive, 1 eavesdropper antenna) has been studied earlier in \cite{Shafiee-09} and the MISO case (single-antenna receiver) has been considered in \cite{Li-07}\cite{Shafiee-07} and settled in \cite{Khisti-07}\cite{Khisti-1}, for which beamforming is optimal and which is also the case for a MIMO-WTC in the low SNR regime. The case of isotropic eavesdropper is studied in detail in \cite{Loyka-13}, including the optimal signaling in an explicit closed form and its properties. This case is shown to be the worst-case MIMO wire-tap channel. Based on this, lower and upper (tight) capacity bounds have been obtained for the general case, which are achievable by an isotropic eavesdropper. The set of channels for which isotropic signaling is optimal has been fully characterized \cite{Loyka-13}. It turns out to be more richer than that of the conventional (no eavesdropper) MIMO channel. A closed-form solution was obtained in \cite{Loyka-14} for the case of weak eavesdropper but otherwise arbitrary channel; its optimal power allocation somewhat resembles the water-filling but is not identical to it. For the case of parallel channels, independent signaling is optimal \cite{Khisti-08}\cite{Li-2}, which implies that the optimal covariance matrix is diagonal; the corresponding optimal power allocation can be found in \cite{Li-2}. This also implies that the eigenvectors of optimal covariance matrix are the same as the right singular vectors of the legitimate or eavesdropper channels when the latter two are the same \cite{Loyka-14} and the corresponding power allocation is the same as in \cite{Li-2}. The low-SNR regime has been studied in detail in \cite{Gursoy}. In particular, signaling on the strongest eigenmode(s) of the difference channel matrix is optimal. Little is known beyond these special cases and the general case is still an open problem.

While numerical algorithms have been proposed in \cite{Li-13,Steinwandt-14} to compute a transmit covariance matrix for the MIMO-WTC, their convergence to a \textit{global} optimum has not been proved. The main difficulty lies in the fact that the underlying optimization problems are not convex and hence KKT conditions are not sufficient for optimality \cite{Bertsekas}. In particular, while the alternating optimization algorithm in \cite{Li-13} is shown to convergence to a KKT (stationary) point, it is not necessarily a global maximum (due to the above reason); it may, in fact, be a saddle point or a \textit{local} rather than \textit{global} maximum of the secrecy rate\footnote{For non-convex problems, KKT point can also be a local minimum rather than maximum. This is ruled out in \cite{Li-13} by the non-decreasing nature of the generated sequence of objective values.} and it is not known how far away it is from the global maximum. This remark also applies to the algorithms considered in \cite{Steinwandt-14,Alvarado}.

The purpose of this paper is to develop a numerical algorithm for computing a \textit{globally}-optimal covariance matrix in the general case, i.e. for the general Gaussian MIMO-WTC (degraded or not), with guaranteed convergence to a \textit{global} optimum, and to prove its convergence. This is a challenging task as the underlying optimization problem is not convex so that standard tools of convex optimization cannot be used; in general, non-convex problems are much harder to solve \cite{Boyd}. We deal with this challenge by using the minimax representation of the secrecy capacity found in \cite{Khisti-2}. While this representation appears to be more complicated than the standard one (the former involves two conflicting optimizations while the latter - only one), it turns out to be much easier to solve, at least numerically, as we demonstrate using the primal-dual representation of Newton method in combination with the barrier method. The main advantage of this approach is that each of the two problems is convex, the saddle-point property holds and hence the respective KKT conditions are sufficient for global optimality (Slater's condition holds as well). A conceptually-similar approach has been used before for optimizing the transmitter with per-antenna power constraints in the regular (no secrecy) MIMO broadcast channel in \cite{Yu}. Our custom-made algorithm essentially solves the KKT optimality conditions (see e.g. \cite{Boyd} for a background on these conditions), which are sufficient for the minimax problem at hand, in an iterative way using the primal-dual representation of Newton method in combination with the barrier method (to accommodate inequality constraints) adopted to the MIMO WTC setting, see Section V.  A proof of the algorithm's convergence to a \textit{global} optimum is also provided for the general case. While we formulate the algorithm for the total power constraint, it can be easily modified to accommodate other forms of power constraint, e.g. maximum per-antenna constraint (instead or in addition to the total power constraint), and also to solve a dual problem of minimizing the total transmit power under the secrecy rate constraint.

A key part of the convergence proof for our algorithm involves a proof of non-singularity of the KKT matrix\footnote{A singular KKT matrix would imply that the corresponding Newton step is not defined and thus the algorithm would terminate without converging to a global optimum.}, so that Newton steps are well-defined for all iterations of the algorithms and they generate a sequence of norm-decreasing residuals and hence converge to a globally-optimal point (i.e. a solution of the KKT conditions which corresponds to zero residual). This is a difficult task since the underlining optimization problems involve both maximization and minimization and the corresponding KKT matrix is indefinite so that the regular tools developed for positive semi-definite matrices \cite{Horn-1} do not apply. A block-partitioned factorization of the KKT matrix is used to accomplish it. This is explained in Section V, which also gives a bound on the optimality gap for each step of the barrier method. Numerical examples in Section VII demonstrate fast convergence of the algorithm: 20 to 40 Newton steps are already sufficient to achieve a very high precision (up to the machine precision level), even for large system. Even less steps are required if the secrecy capacity is the only quantity of interest. Section VI demonstrates that significant simplifications in the algorithm are possible for a degraded channel.  Section IV gives a brief review of the barrier and Newton methods for inequality-constrained optimization, and presents an algorithm for minimax problems with guaranteed convergence to a global optimum. Section III summarizes the minimax representation of the secrecy capacity on which our algorithm is based. Section 2 reviews the Gaussian  MIMO-WTC model and its secrecy capacity.

%\vspace*{0.5\baselineskip}
\section{Wire-Tap Gaussian MIMO Channel Model}
Let us consider the standard Gaussian MIMO wire-tap channel model,
\bal
\label{eq1}
\by_1 = \bH_1\bx+\bxi_1, \quad \by_2 = \bH_2\bx+\bxi_2
\eal
where $\bx=[x_1 ,x_2 ,...x_m ]'\in R^{m,1}$ is the (real) transmitted  signal vector of dimension $m \times 1$, $'$ denotes transposition, $\by_{1(2)} \in R^{n_{1(2)},1}$ are the (real) received vectors at the receiver (eavesdropper), $\bxi_{1(2)} $ is the additive white Gaussian noise at the receiver (eavesdropper) (normalized to unit variance in each dimension), $\bH_{1(2)} \in R^{n_{1(2)} ,m}$ is the $n_{1(2)} \times m$ matrix of the channel gains between each Tx and each receive (eavesdropper) antenna, $n_{1(2)} $ and $m$ are the number of Rx (eavesdropper) and Tx antennas respectively. The channels $\bH_{1(2)} $ are assumed to be quasistatic (i.e., constant for a sufficiently long period of time so that the infinite horizon information theory assumption holds) and frequency-flat, with full channel state information (CSI) at the Rx and Tx ends. A secrecy rate is achievable for this channel if (i) the receiver is able to recover the message with arbitrary low error probability (reliability criterion) and (ii) the information leaked to the eavesdropper approaches zero asymptotically (secrecy criterion) \cite{Bloch}.

\begin{figure}[t]%[htbp]
\label{fig.1a}
\centerline{\includegraphics[width=2.5in]{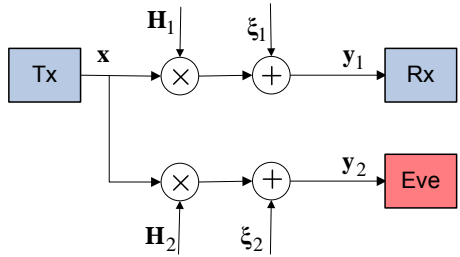}}
\caption{ A block diagram of the Gaussian MIMO wiretap channel. Full channel state information is available at the transmitter. $\bH_{1(2)}$ is the channel matrix to the legitimate receiver (eavesdropper); $\bx$ is the transmitted signal and $\by_{1(2)}$ is the received (eavesdropper) signal; $\bxi_{1(2)} $ is the AWGN at the receiver (eavesdropper). The information leakage to the eavesdropper is required to approach zero asymptotically.}
\end{figure}

For a given transmit covariance matrix $\bR=E\{\bx\bx'\}$, where $E\{\cdot\}$ is statistical expectation, the maximum achievable secrecy rate between the Tx and Rx (so that the rate between the Tx and eavesdropper is zero) is \cite{Khisti-2}-\cite{Liu}
\begin{equation}
\label{eq.C(R)}
C(\bR)= \frac{1}{2} \ln \frac{|\bI+\bW_1\bR|}{|\bI+\bW_2\bR|}=C_1 (\bR)-C_2 (\bR)
\end{equation}
where negative $C(\bR)$ is interpreted as zero rate, $\bW_i =\bH_i' \bH_i$, and the secrecy capacity subject to the total Tx power constraint is
\begin{equation}
\label{eq.Cs}
C_s =\mathop {\max }\limits_{\bR\ge 0} C(\bR) \mbox{\ s.t.} \ tr\bR \le P_T
\end{equation}
where $P_T $ is the total transmit power (also the SNR since the noise is normalized). It is well-known that the problem in \eqref{eq.Cs} is not convex and hence very difficult to solve in general and explicit solutions for the optimal Tx covariance is not known for the general case, but only for some special cases, e.g. low/high SNR, MISO channels,  full-rank or rank-1 case \cite{Khisti-1}-\cite{Li} or for the parallel channel \cite{Khisti-08}\cite{Li-2}.

Since \eqref{eq.Cs} is not a convex problem in the general case, not only widely-used Karush-Kuhn-Tucker optimality conditions are not sufficient, but also the convergence of a numerical algorithm to a global optimum is very difficult if not impossible to insure since the standard tools of convex optimization fail to work and, in general, non-convex problems are much harder to deal with \cite{Boyd}. Thus, \eqref{eq.Cs} is very difficult to solve either analytically or numerically in the general case. Even when $C(R)$ is concave so that the problem becomes convex (when the channel is degraded, $\bW_1 \ge \bW_2$), its analytical solution is not known, except for the special cases noted above, and the known convex solvers \cite{Grant'06}-\cite{CVX} are not able to solve the problem, even in this convex setting so that a custom-made algorithm has to be developed.

To go around this difficulty, we use the following minimax representation of the secrecy capacity.

%\vspace*{0.5\baselineskip}
\section{Minimax Representation of Secrecy Capacity}
\label{sec:Max-Min}

A minimax representation of the secrecy capacity was obtained in  \cite{Khisti-2} via a channel enhancement argument and a clever bounding technique, which is instrumental for our algorithm and is summarized below.

\begin{thm}[Theorem 1 in \cite{Khisti-2}]
The secrecy capacity of Gaussian MIMO-WTC channel in \eqref{eq.C(R)} can be presented in the following minimax form :
\bal
\label{eq.Cs.R.K}
C_s=\max_{\bR}\min_{\bK} f(\bR,\bK) = \min_{\bK}\max_{\bR} f(\bR,\bK)
\eal
where
\bal
\label{eq.f(R,K)}
&f(\bR,\bK) =\frac{1}{2}\ln \frac{|\bI+\bK^{-1}\bH\bR\bH'|}{|\bI+\bW_2\bR|} \ge C(\bR),\\
\label{eq.K}
&\bK = \left(
        \begin{array}{cc}
            \bI & \bK_{21}' \\
            \bK_{21} & \bI \\
          \end{array}
        \right) \ge \bf{0}, \
\bH= \left(
      \begin{array}{cc}
        \bH_1 \\
        \bH_2 \\
      \end{array}
     \right),
\eal
and the optimization is over the set $\sS$ of all feasible $\bR, \bK$:
\bal
\label{eq.sS}
\sS = \{(\bR,\bK): tr\bR \le P,\ \bR, \bK \ge \bf{0},\ \bK \ \mbox{is as in \eqref{eq.K}} \}.
\eal
\end{thm}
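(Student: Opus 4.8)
The plan is to pass to the equivalent $m\times m$ form of $f$ via Sylvester's determinant identity, $|\bI+\bK^{-1}\bH\bR\bH'|=|\bI+\bR\,\bH'\bK^{-1}\bH|$, so that
\bal
&f(\bR,\bK)=\frac{1}{2}\ln\frac{|\bI+\bR\bV|}{|\bI+\bR\bW_2|},\qquad \bV:=\bH'\bK^{-1}\bH,
\eal
and then to establish the two matrix inequalities $\bV\ge\bW_1$ and $\bV\ge\bW_2$, valid for every feasible $\bK$. Both reduce to $\bK^{-1}\ge\mathrm{diag}(\bI,\bo)$ and $\bK^{-1}\ge\mathrm{diag}(\bo,\bI)$, which follow from a one-line quadratic-form computation: with $\bu=\bK^{-1}\bz$ one gets $\bz'\bK^{-1}\bz-\|\bz_1\|^2=\bu_2'(\bI-\bK_{21}\bK_{21}')\bu_2\ge0$ and, symmetrically, $\bz'\bK^{-1}\bz-\|\bz_2\|^2=\bu_1'(\bI-\bK_{21}'\bK_{21})\bu_1\ge0$, the right-hand sides being nonnegative because $\bK\ge\bo$ forces $\bI-\bK_{21}\bK_{21}'\ge\bo$ and $\bI-\bK_{21}'\bK_{21}\ge\bo$. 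Congruence by $\bH$ then gives $\bV\ge\bH_1'\bH_1=\bW_1$ and $\bV\ge\bH_2'\bH_2=\bW_2$. The first inequality, together with monotonicity of the determinant on the positive cone, yields the lower bound $f(\bR,\bK)\ge C(\bR)$ claimed in \eqref{eq.f(R,K)} (the same bound reads $f=I(\bx;\by_1|\by_2)=C(\bR)+I(\bx;\by_2|\by_1)\ge C(\bR)$ for the stacked channel with noise covariance $\bK$, since the marginals of $\by_1,\by_2$ do not depend on $\bK_{21}$).

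The second inequality $\bV\ge\bW_2$ is the structural key. It says that for \emph{every} feasible $\bK$ the map $\bR\mapsto f(\bR,\bK)$ is the secrecy rate of a \emph{degraded} Gaussian MIMO-WTC (legitimate Gram matrix $\bV$ dominating the eavesdropper Gram matrix $\bW_2$), which is concave in $\bR$ on the convex compact set $\{\bR\ge\bo:\ tr\bR\le P\}$. I would pair this with convexity of $\bK\mapsto f(\bR,\bK)$ on the convex compact set of feasible $\bK$: since the $\bW_2$-term is constant in $\bK$, it suffices that $\bK\mapsto\ln|\bK+\bH\bR\bH'|-\ln|\bK|$ be convex, which follows from the representation $\int_0^1 tr[(\bK+s\bH\bR\bH')^{-1}\bH\bR\bH']\,ds$ and operator convexity of the matrix inverse. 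With $f$ concave in $\bR$, convex in $\bK$, and both feasible sets convex and compact, Sion's minimax theorem guarantees a saddle point and the equality $\max_\bR\min_\bK f=\min_\bK\max_\bR f$, i.e. the second equality in \eqref{eq.Cs.R.K}.

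It then remains to identify this common value $V$ with $C_s$. One direction is immediate from $f\ge C$: for every $\bK$, $\max_\bR f(\bR,\bK)\ge\max_\bR C(\bR)=C_s$, hence $V=\min_\bK\max_\bR f\ge C_s$. For the reverse I would exhibit a single $\bK^\sharp$ with $\max_\bR f(\bR,\bK^\sharp)=C_s$. Starting from an optimizer $\bR^o$ of \eqref{eq.Cs}, so that $C(\bR^o)=C_s$, the idea is to pick $\bK_{21}^\sharp$ realizing the enhancement relation $\bH_2=\bK_{21}^\sharp\bH_1$ on the range of $\bR^o$; this makes $\bx\to\by_1\to\by_2$ a Markov chain, so $I(\bx;\by_2|\by_1)=0$ and $f(\bR^o,\bK^\sharp)=C(\bR^o)=C_s$. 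One then verifies that $\bR^o$ satisfies the stationarity (KKT) condition of the \emph{concave} inner problem $\max_\bR f(\bR,\bK^\sharp)$, so that $\max_\bR f(\bR,\bK^\sharp)=f(\bR^o,\bK^\sharp)=C_s$ and therefore $V\le C_s$. Together with $V\ge C_s$ this gives $V=C_s$, and the saddle point makes both $\max_\bR\min_\bK f$ and $\min_\bK\max_\bR f$ equal to $C_s$, as asserted in \eqref{eq.Cs.R.K}.

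The main obstacle is exactly this last construction. In the non-degraded case the relation $\bH_2=\bK_{21}\bH_1$ cannot be met by a feasible contraction $\bK_{21}$ for \emph{all} $\bR$ (indeed $\min_\bK f(\bR,\bK)>C(\bR)$ for some $\bR$), so the enhancement must be carried out only at the optimizer $\bR^o$ and then shown to be consistent with the KKT conditions of \eqref{eq.Cs}; matching the optimality conditions of the enhanced, degraded channel with those of the original nonconvex problem is the delicate point, and is where the channel-enhancement and bounding argument of \cite{Khisti-2} is needed. The remaining checks are routine given the earlier steps: the constructed $\bK^\sharp$ must be verified feasible ($\bK^\sharp\ge\bo$ with unit-matrix diagonal blocks), and $\bR^o$ is then automatically a \emph{global} maximizer of $f(\cdot,\bK^\sharp)$ because that inner problem is concave.
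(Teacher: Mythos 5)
First, note that the paper does not prove this statement at all: it is quoted verbatim as Theorem 1 of \cite{Khisti-2}, and the text around it only asserts (without proof) the convex--concave structure and the saddle-point property. So the only meaningful comparison is with the proof in \cite{Khisti-2} itself. Against that benchmark, the parts of your argument that you actually carry out are correct and cleanly done: the quadratic-form computation giving $\bK^{-1}\ge\mathrm{diag}(\bI,\bo)$ and $\bK^{-1}\ge\mathrm{diag}(\bo,\bI)$, hence $\bV\ge\bW_1$ and $\bV\ge\bW_2$ and $f(\bR,\bK)\ge C(\bR)$; the concavity of $\bR\mapsto f(\bR,\bK)$ via the induced degradedness $\bV\ge\bW_2$; the convexity of $\bK\mapsto\ln|\bK+\bQ|-\ln|\bK|$ via the integral representation and operator convexity of the inverse; Sion/von Neumann for the saddle point; and the easy direction $\min_\bK\max_\bR f\ge C_s$. (A minor technicality you should at least acknowledge: $f(\bR,\cdot)$ is $+\infty$ at singular $\bK$, so compactness plus lower semicontinuity, or a restriction to $\bK\ge\epsilon\bI$, is needed before invoking the minimax theorem; the paper itself flags that \cite{Khisti-2} treats the singular case separately.)

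The genuine gap is the reverse direction $\min_\bK\max_\bR f\le C_s$, which is the entire content of the theorem, and your sketch of it does not go through as stated. The relation $\bH_2=\bK_{21}^\sharp\bH_1$ generally admits no solution with $\|\bK_{21}^\sharp\|_2\le1$ (consider any direction in which the eavesdropper channel is stronger), so the construction must be restricted to the directions actually used by $\bR^o$; but knowing that $\bR^o$ signals only along directions where the legitimate channel dominates is itself a nontrivial theorem (it is the result of \cite{Oggier}, \cite{Khisti-2}, \cite{Loyka} discussed in the introduction), and deriving it from the necessary-only KKT conditions of the nonconvex problem \eqref{eq.Cs} is precisely the hard step you defer to \cite{Khisti-2}. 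You have also inverted the logical direction of the original proof: \cite{Khisti-2} does not start from an optimizer $\bR^o$ of the nonconvex problem (on which one has essentially no handle); it starts from the saddle point $(\bR^*,\bK^*)$, whose existence you have already established, and uses the stationarity condition $\nabla_K f=0$ at $\bK^*$ together with the complementary-slackness structure to show that $\bx\to\by_1\to\by_2$ is Markov \emph{at the saddle point}, whence $f(\bR^*,\bK^*)=C(\bR^*)\le C_s$. That route closes the loop without ever characterizing the nonconvex optimizer, and it is the step your proposal is missing. As written, your argument proves only $\min_\bK\max_\bR f\ge C_s$ plus the structural facts; the equality in \eqref{eq.Cs.R.K} remains unproven.
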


The upper bound in \eqref{eq.f(R,K)} via $f(\bR,\bK)$ was obtained from a genie-aided receiver which knows $\by_2$ (in addition to $\by_1$) and $\bK$ represents noise covariance between $\bxi_1$ and $\bxi_2$. Minimization over $\bK$ is due to the fact that the true capacity does not depend on $\bK$ while the upper bound does so it's natural to seek the least upper bound. This bound can also be used in a numerical algorithm to evaluate the optimality gap with respect to $\min_{\bK}$ for each $\bR$. In fact, \eqref{eq.Cs.R.K} states that letting the receiver to know $\by_2$ in addition to $\by_1$ does not increase the secrecy capacity under the worst-case noise covariance, which is rather surprising.

\begin{remark}
2nd equality in \eqref{eq.Cs.R.K} expresses the saddle-point property, which is equivalent to the following inequalities (see e.g. \cite{Boyd}\cite{Zeidler}):
\bal
\label{eq.f(bR*,bK*)}
f(\bR,\bK^*) \le f(\bR^*,\bK^*) \le f(\bR^*,\bK)
\eal
which hold for any feasible $\bR, \bK$, where $(\bR^*,\bK^*)$ is the optimal (saddle) point of \eqref{eq.Cs.R.K}. These inequalities follow from von Neumann minimax Theorem since $f(\bK,\bR)$ is convex in $\bK$ for any fixed $\bR$ and concave in $\bR$ for any fixed $\bK$ (and for any channel, degraded or not), and the feasible set in \eqref{eq.sS} is convex.
\end{remark}
\begin{remark}
It is the convex-concave nature of $f(\bR,\bK)$ along with the saddle-point property in \eqref{eq.f(bR*,bK*)} and the constraints in \eqref{eq.sS} that make the respective KKT conditions sufficient for global optimality (see e.g. \cite{Boyd} and \cite{Ghosh} for more details; note that Slater's condition holds for these problems). This cannot be said about the original problem in \eqref{eq.Cs}. The sufficiency of the KKT conditions is the key for our algorithm and a proof of its convergence to a \textit{global} maximum (rather than just a stationary point).
\end{remark}

While the equivalence of \eqref{eq.Cs} and \eqref{eq.Cs.R.K} was established in \cite{Khisti-2}, an analytical solution of any one is not known in the general case. In fact, no analytical solution is known for the latter. Despite its more complicated appearance due to two conflicting optimizations, \eqref{eq.Cs.R.K} is in fact easier to solve than \eqref{eq.Cs}, at least numerically, since both optimizations are convex and the respective KKT conditions are sufficient for global optimality; a proof of convergence of the corresponding numerical algorithm to a \textit{global optimum} is also within reach for \textit{any} channel. While the standard tools developed for single convex optimization \cite{Boyd} do not apply directly here due to two conflicting optimizations involved, their primal-dual reformulation does work, as explained below.

%\vspace*{0.5\baselineskip}

We proceed to solve the minimax problem in \eqref{eq.Cs.R.K} via KKT conditions\footnote{See e.g. \cite{Boyd} for a background on KKT conditions.}. Subsequently, a numerical algorithm is developed with guaranteed convergence to a global optimum for any channel, degraded or not, which is not possible for \eqref{eq.Cs} due to its non-convex nature in the general case.  The Lagrangian for the problem in \eqref{eq.Cs.R.K} is
\bal \notag
L = f(\bR,\bK) &- tr \bM_1\bK + tr \bM_2\bR - \lambda (tr\bR-P)\\
 &+ tr \bLam(\bK-\bI)
\eal
where $\bM_1, \bM_2 \ge \bf{0}$ are (matrix) Lagrange multiplies responsible for the positive semi-definite constraints $\bK, \bR \ge \bf{0}$, $\lambda \ge 0$ is (scalar) Lagrange multiplier responsible for the total power constraint $tr\bR \le P$, and
\bal
\label{eq.A}
&\bLam = \left(
        \begin{array}{cc}
            \bLam_{1} & \bf{0} \\
            \bf{0}   & \bLam_2 \\
          \end{array}
        \right)
\eal
is a (matrix) Lagrange multiplier responsible for the constraint on $\bK$ as in \eqref{eq.K}. There are two sets of KKT conditions - one per optimization in \eqref{eq.Cs.R.K}. For the maximization over $\bR$, the KKT conditions are (to simplify notations, we have omitted the $\frac{1}{2}$ factor):
\bal \notag
\label{eq.KKT1}
&\nabla_R L = (\bI +\bW\bR)^{-1}\bW - (\bI+\bW_2\bR)^{-1}\bW_2
 + \bM_2 - \lambda\bI\\
 & \qquad= \bf{0}, \\
&\qquad \bM_2\bR = \bo,\\
& \qquad tr\bR \le P,\ \bR, \bM_2 \ge \bo, \ \lambda \ge 0,
\eal
where $\nabla_R$ is the gradient (derivative) with respect to $\bR$ and $\bW=\bH'\bK^{-1}\bH$. The KKT conditions for the minimization over $\bK$ are
\bal
\nabla_K L = (\bK+\bQ)^{-1} - \bK^{-1} - \bM_1 + \bLam = \bo, \\
\bM_1\bK = \bo, \\
\label{eq.KKT2}
\bK, \bM_1 \ge \bo,
\eal
and $\bK$, $\bLam$ are as in \eqref{eq.K}, \eqref{eq.A}; $\bQ=\bH\bR\bH'$. Here, we implicitly assume that $\bK > \bo$. While the singular case was treated in a separate way in \cite{Khisti-2}, we do not need a separate treatment here since our numerical algorithm is iterative and, at each step, it produces a non-singular $\bK$ which, however, may be arbitrary close to a singular matrix (i.e., may have arbitrary small but positive eigenvalues). This models numerically a case of singular $\bK$ and is a standard feature of the barrier method in general, where the boundary of the constraint set can be approached arbitrary closely but never achieved (see e.g. Chapter 11 in \cite{Boyd} for more detail). We remark that negligibly-small eigenvalues can be rounded off to 0 and they also imply that the numerical rank is low.

An optimal point in \eqref{eq.Cs.R.K} must satisfy both sets of KKT conditions simultaneously and these conditions are also sufficient for global optimality, as noted above. An analytical solution to these conditions is not known. Our numerical algorithm in Section V solves these two sets of KKT conditions in an iterative way, with guaranteed convergence to a globally-optimal point.

\section{Barrier Method for Minimax Optimization}

In this section, we first give a brief introduction into Newton and barrier methods for inequality-constrained optimization; the reader is referred to Chapters 9-11 of \cite{Boyd} for more details and background information. These two methods are used as key components to construct an algorithm for minimax optimization. Subsequently, this algorithm is adapted to the secrecy problem in \eqref{eq.Cs.R.K} and its guaranteed convergence to a global optimum is proved for any channel (degraded or not) in Section V.

\subsection{Minimax problem via primal-dual Newton method}

Newton method for an equality-constrained problem essentially transforms the problem into a sequence of quadratic problems for which the sufficient KKT conditions are a system of linear equations \cite{Boyd}.

Let us consider the minimax problem of the form\footnote{A similar problem, without equality constraints, have been briefly considered in \cite{Boyd}. More details can be found in \cite{Ghosh}. Our development here is tailored to be used for the secrecy problem in \eqref{eq.Cs.R.K}.}
\bal
\label{eq.max-min P}
\max_{\bx} \min_{\by} f(\bx,\by), \ \mbox{s.t.} \ \bA_x\bx=\bb_x, \bA_y\by=\bb_y
\eal
where vectors $\bx,\by$ represent optimization variables, the objective $f(\bx,\by)$ is concave in $\bx$ and convex in $\by$; given matrices $\bA_x, \bA_y$ and vectors $\bb_x, \bb_y$ represent the equality constraints for each variable. The KKT onditions for this problem are
\bal
\label{eq.KKT.xy}\notag
&\nabla_x f+\bA_x'\blam_x=0,\ \bA_x \bx - \bb_x = 0,\\
&\nabla_y f+\bA_y'\blam_y=0,\ \bA_y \by - \bb_y = 0,
\eal
where $\blam_x, \blam_y$ are dual variables, and they are sufficient for global optimality.

While the standard Newton method can be used for both optimizations, a proof of its convergence is challenging since the objective is not monotonous (it decreases in one step and increases at the other). The residual form of the Newton method is preferable since, as it was observed in \cite{Boyd}, it reduces the norm of the residual at each step and thus generates a monotonous sequence whose convergence to zero can be guaranteed. To introduce this method, let us aggregate variables, derivatives and parameters as follows:
\bal \notag
\bz &=
\left[
\begin{array}{cc}
  \bx\\ \by\\
\end{array}
\right],\
\blam =
\left[
\begin{array}{cc}
  \blam_x\\ \blam_y\\
\end{array}
\right],\
\bb = \left[
\begin{array}{cc}
  \bb_x\\ \bb_y\\
\end{array}
\right],\\
\bA &= \left[
\begin{array}{cc}
  \bA_x & \bf{0}\\
  \bf{0} & \bA_y\\
\end{array}
\right],
\eal
\bal
\nabla f =
\left[
\begin{array}{cc}
  \nabla_x f\\ \nabla_y f\\
\end{array}
\right],\
\nabla^2 f =
\left[
\begin{array}{cc}
  \nabla^2_{xx} f & \nabla^2_{xy} f \\
  \nabla^2_{yx} f & \nabla^2_{yy} f \\
\end{array}
\right],
\eal
The KKT conditions in \eqref{eq.KKT.xy} can be cast in a residual form:
\bal
\label{eq.KKT.z}
\br= [(\nabla f+\bA'\blam)', (\bA \bz - \bb)']'= \bo.
\eal
The Newton method iteratively solves $\br=\bo$ using 1st-order approximation (Newton step):
\bal
\notag
\br(\bw_0+\Delta\bw)&= \br(\bw_0)+D\br\Delta\bw + o(\Delta\bw)\\
\label{eq.r}
&\approx \br(\bw_0)+D\br\Delta\bw
\eal
where $\bw=[\bz',\blam']$ is the vector of aggregated (primal/dual) variables, $\bw_0$ and $\Delta\bw$ are its initial value and update,  $D\br$ is the derivative of $\br(\bw)$:
\bal
D\br = \left[\frac{\partial\br}{\partial\bz'}, \frac{\partial\br}{\partial\blam'}\right] =
\left[
\begin{array}{cc}
  \nabla^2 f(\bz_0) & \bA' \\
  \bA & \bf{0} \\
\end{array}
\right]=
\bT
\eal
and $\bT$ is the KKT matrix. Now, setting $\br(\bw_0+\Delta\bw)=\bo$ and solving for $\Delta\bw$ from \eqref{eq.r} gives the update
\bal
\label{eq.dw}
\Delta\bw:\  \bT\Delta\bw = -\br(\bw_0)
\eal
We further show in Section V that $\bT$ is non-singular for our problem so that this system of linear equations is guaranteed to have a unique solution for any set of parameters\footnote{While $\Delta\bw= -\bT^{-1}\br(\bw_0)$ is its analytical solution, it is not computed in practice since computing $\bT^{-1}$ is computationally-expensive and may result in loss of accuracy for ill-conditioned $\bT$, see e.g. \cite{Horn-1}.}.

Having the steps $\bdw = (\bdz', \bdlam')'$ computed,  the primal/dual variable updates are
\bal
\bz = \bz_0 + s\bdz, \blam = \blam_0 + s\bdlam
\eal
where the step size $s$ is found via the backtracking line search  \cite{Boyd} as in Algorithm 1 below.

\begin{algorithm}
\caption{Backtracking line search}
\label{alg.1}
\begin{algorithmic}
\Require $\bw_0,\ 0< \alpha<1/2,\ 0<\beta<1,\ s=1$.
\While {$|\br(\bw_0+s\bdw)| > (1-\alpha s)|\br(\bw_0)|\ $} {$\ s:=\beta s$}
\EndWhile
\end{algorithmic}
\end{algorithm}
In this Algorithm, $\alpha$ is a \% of the linear decrease in the residual one is prepared to accept at each step, and $\beta$ is a parameter controlling the reduction in step size at each iteration of the algorithm. The Newton method in combination with the backtracking line search  is guaranteed to reduce the residual norm $|\br(\bw)|$ at each step according to the following residual norm-reduction property \cite{Boyd}:
\bal
\label{eq.r.dt}
\frac{d}{ds}|\br(\bw_0+s\bdw)| = - |\br(\bw_0)| <0,
\eal
so that, for sufficiently small $s$, the residual indeed shrinks at each iteration (unless $|\br(\bw_0)| =0$, which implies that $\bw_0$ is optimal). This insures convergence of the algorithm to a global optimum since KKT conditions are sufficient for optimality and any locally-optimal point is automatically globally-optimal as the problem is convex.

\begin{algorithm}
\caption{Newton method for minimax optimization}
\label{alg.2}
\begin{algorithmic}
\Require $\bz_0, \blam_0, \alpha, \beta, \epsilon$
\Repeat
\State 1. Find $\bdz, \bdlam$ using Newton step in \eqref{eq.dw}.
\State 2. Find $s$ using the backtracking line search (Algorithm 1).
\State 3. Update variables: $\bz_{k+1}=\bz_k+s \bdz, \blam_{k+1}=\blam_k+s \bdlam$.
\Until $|\br(\bz_{k+1},\blam_{k+1})| \le \epsilon$.
\end{algorithmic}
\end{algorithm}

Based on this, the Newton method for minimax optimization is as in Algorithm 2. The convergence of this algorithm to a global optimum is insured by the convex/concave nature of the objective, sufficiency of the KKT conditions in \eqref{eq.KKT.xy}, non-singularity of the KKT matrix $\bT$ at each step  (as proved in Section V) and the norm-decreasing residual  property in \eqref{eq.r.dt}, which ensures that the method generates a sequence of sub-optimal solutions with monotonically decreasing residuals, for which the stationary point has zero residual and thus solves the sufficient KKT conditions. While the global optimum point corresponds to zero residual, $|\br| = 0$ (this is equivalent to the KKT conditions in \eqref{eq.KKT.xy}), the practical version $|\br| \le \epsilon$ of this condition is used in Algorithm 2 as a stopping criterion. This form of the stopping criteria is justified by not only the residual form $|\br| = 0$  of the KKT conditions, but also by the norm-decreasing residual property in \eqref{eq.r.dt}.

As a side remark, we note that this algorithm can also be used to solve the problem in \eqref{eq.max-min P} with $\max$ and $\min$ interchanged, due to the saddle point property.

\subsection{Barrier method for inequality-constrained problems}

Let us now combine the barrier method and the minimax method above to construct an algorithm for minimax optimization with equality and inequality constraints. Consider the following problem with inequality constraints:
\bal \notag
\label{eq.max-min ineq}
\max_{\bx} \min_{\by} f(\bx,\by), \ \mbox{s.t.} \ &\bA_x\bx=\bb_x,\ \bA_y\by=\bb_y,\\
&f_1(\bx) \le 0,\ f_2(\by) \le 0
\eal
where $f_1$ and $f_2$ are the constraint functions. The key idea of the barrier method is to use a soft instead of hard constraints by augmenting the objective with the barrier functions responsible for the inequality constraints so that the new objective for the problem in \eqref{eq.max-min ineq} becomes:
\bal
f_t(\bx,\by) = f(\bx,\by) + \psi_t(f_1(\bx)) - \psi_t(f_2(\by))
\eal
where we use the logarithmic barrier function:
\bal
\label{eq.psi_t}
\psi_t(x) = \frac{1}{t} \ln(-x)
\eal
and where $t$ is the barrier parameter. The barrier method transforms the inequality-constrained problem in \eqref{eq.max-min ineq} into the following problem without inequality constraints:
\bal
\label{eq.max-min B}
\max_{\bx} \min_{\by} f_t(\bx,\by), \ \mbox{s.t.} \ \bA_x\bx=\bb_x, \bA_y\by=\bb_y
\eal
The optimality gap due to this transformation can be upper bounded as follows.
\begin{prop}
The optimality gap of the barrier method in \eqref{eq.max-min B} applied  to the minimax problem in \eqref{eq.max-min ineq}  is as follows:
\bal
\label{eq.opt.gap}
|f(\bx^*(t),\by^*(t)) - p^*| \le 1/t
\eal
where $p^*$ is an optimal value of the original problem in \eqref{eq.max-min ineq} and $(\bx^*(t),\by^*(t))$ is an optimal point for the modified problem in \eqref{eq.max-min B}.
\end{prop}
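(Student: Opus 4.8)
The plan is to adapt the standard barrier optimality-gap argument (Chapter~11 of \cite{Boyd}) to the saddle-point setting, exploiting the fact that the $\bx$-side and $\by$-side barrier terms contribute to the gap with \emph{opposite} signs, so that the net two-sided bound is $1/t$ rather than $2/t$. Throughout I use that the barrier method keeps the iterate strictly feasible, so $f_1(\bx^*(t))<0$ and $f_2(\by^*(t))<0$, and that $(\bx^*(t),\by^*(t))$ is a saddle point of $f_t$ over the equality-constrained set, i.e. $\bx^*(t)$ maximizes $f_t(\cdot,\by^*(t))$ subject to $\bA_x\bx=\bb_x$ and $\by^*(t)$ minimizes $f_t(\bx^*(t),\cdot)$ subject to $\bA_y\by=\bb_y$. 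I also assume $f_1,f_2$ convex, as is implicit in the constraints of \eqref{eq.max-min ineq}.

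First I would read off the stationarity conditions of the two barrier subproblems. Since $\psi_t'(u)=1/(tu)$, stationarity of the $\bx$-maximization gives $\nabla_x f + t^{-1}f_1^{-1}\nabla_x f_1 + \bA_x'\nu_x = \bo$, while the term $-\psi_t(f_2)$ is constant in $\bx$; stationarity of the $\by$-minimization gives $\nabla_y f - t^{-1}f_2^{-1}\nabla_y f_2 + \bA_y'\nu_y = \bo$, the term $\psi_t(f_1)$ being constant in $\by$, both evaluated at $(\bx^*(t),\by^*(t))$. Defining $\lambda_1 = -1/(t\,f_1(\bx^*(t)))>0$ and $\lambda_2 = -1/(t\,f_2(\by^*(t)))>0$, these become exactly the KKT stationarity conditions of the Lagrangians $L_1(\bx)=f(\bx,\by^*(t))-\lambda_1 f_1(\bx)$ and $L_2(\by)=f(\bx^*(t),\by)+\lambda_2 f_2(\by)$. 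The crucial bookkeeping identities are $\lambda_1 f_1(\bx^*(t)) = \lambda_2 f_2(\by^*(t)) = -1/t$.

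Next comes the upper bound. Because $f$ is convex in $\by$ and $f_2$ is convex, $L_2$ is convex, so its stationarity condition makes $\by^*(t)$ the global minimizer over $\{\by:\bA_y\by=\bb_y\}$; hence the inner dual value equals $\min_\by L_2(\by) = f(\bx^*(t),\by^*(t)) + \lambda_2 f_2(\by^*(t)) = f(\bx^*(t),\by^*(t)) - 1/t$. Weak duality for the inner minimization, together with $\min_\by f(\bx^*(t),\by)\le p^*$ (since $\bx^*(t)$ is but one feasible choice in $\max_\bx\min_\by f = p^*$), then yields $f(\bx^*(t),\by^*(t)) - 1/t \le p^*$. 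The lower bound is the mirror image: $f$ concave in $\bx$ and $f_1$ convex make $L_1$ concave, so $\bx^*(t)$ globally maximizes it, giving outer dual value $\max_\bx L_1(\bx) = f(\bx^*(t),\by^*(t)) - \lambda_1 f_1(\bx^*(t)) = f(\bx^*(t),\by^*(t)) + 1/t$; weak duality for the outer maximization together with $\max_\bx f(\bx,\by^*(t))\ge p^*$ (using $\min_\by\max_\bx f = p^*$ from \eqref{eq.Cs.R.K}) gives $f(\bx^*(t),\by^*(t)) + 1/t \ge p^*$. Combining the two inequalities sandwiches $f(\bx^*(t),\by^*(t))$ within $1/t$ of $p^*$, which is the claim.

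The main obstacle, and the only place real care is needed, is the sign bookkeeping induced by the opposite roles of $\bx$ and $\by$: the maximization barrier enters $L_1$ as $-\lambda_1 f_1$ while the minimization barrier enters $L_2$ as $+\lambda_2 f_2$, and it is precisely this asymmetry that turns the two one-sided gaps into $+1/t$ and $-1/t$, which cancel into a symmetric $1/t$ bound. I would also verify the two supporting inequalities $\min_\by f(\bx^*(t),\by)\le p^*$ and $\max_\bx f(\bx,\by^*(t))\ge p^*$, both immediate from the saddle-point identity $\max_\bx\min_\by f = \min_\by\max_\bx f = p^*$ of \eqref{eq.Cs.R.K} and its following remark, and confirm that strict feasibility of the barrier iterates guarantees $\lambda_1,\lambda_2>0$, so that the constructed dual points are admissible and weak duality applies.
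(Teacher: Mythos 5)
Your proof is correct, and at the top level it follows the same sandwich strategy the paper uses: bound $f(\bx^*(t),\by^*(t))$ from above by $\min_{\by} f(\bx^*(t),\by)+1/t \le p^*+1/t$ and from below by $\max_{\bx} f(\bx,\by^*(t))-1/t \ge p^*-1/t$, invoking the minimax equality for the second chain. The difference is in how the two one-sided $1/t$ bounds are obtained. The paper simply declares Proposition~1 a special case of Proposition~3 (with $m=n_1=n_2=1$), and Proposition~3 in turn imports the centering-point inequalities from the Ghosh--Boyd minimax notes without derivation. You instead derive them from first principles: you read off the barrier stationarity conditions, construct explicit dual multipliers $\lambda_1=-1/(t f_1(\bx^*(t)))$, $\lambda_2=-1/(t f_2(\by^*(t)))$ satisfying $\lambda_i f_i = -1/t$, and apply weak duality to each single-variable subproblem. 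This is the standard Chapter~11 argument transplanted to the saddle setting, and it makes the proof self-contained where the paper relies on a citation; it also makes transparent why the constants would become $m/t$ and $(n_1+n_2)/t$ in Proposition~3, since $\lambda f = -1/t$ generalizes to $\mathrm{tr}(\bLam\bX)=-\dim/t$ for log-determinant barriers. Two small points of care: the claim $\max_{\bx}\min_{\by} f=\min_{\by}\max_{\bx} f=p^*$ should be justified for the general problem \eqref{eq.max-min ineq} by the convex--concave structure (von Neumann/Sion), not by appeal to the specific instance \eqref{eq.Cs.R.K}; and your phrasing about the two gaps ``cancelling'' is loose --- nothing cancels, you simply obtain two one-sided bounds of $1/t$ each, which together give the stated absolute-value bound. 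Neither affects correctness.
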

\begin{proof}
This is a special case of Proposition 3 below with $m=n_1=n_2=1$.
\end{proof}

Thus, by selecting sufficiently high $t$, one can obtain arbitrary small gap. Newton method is used to solve the modified problem with any desired accuracy.

In practice, the modified problem is solved in an iterative way by selecting first a moderately-large value of $t$, solving the problem, increasing $t$ and using the previous solution as a starting point for a new one. In this way, the total number of Newton steps required to achieve certain accuracy is minimized \cite{Boyd}. The algorithm is as follows.

\begin{algorithm}
\caption{Barrier method}
\label{alg.3}
\begin{algorithmic}
\Require $\bz, \blam, \epsilon>0, t>0, \mu >1$
\Repeat
\State 1. Solve the problem in \eqref{eq.max-min B} using Newton method (Algorithm 2) starting at $\bz, \blam$.
\State 2. Update variables: $\bz := \bz^*(t), \blam := \blam^*(t),\ t := \mu t$.
\Until $1/t < \epsilon$.
\end{algorithmic}
\end{algorithm}

\section{Barrier Method for Secrecy Rate Maximization}

In this section, we use the minimax barrier method above to solve the optimal covariance problem in \eqref{eq.Cs.R.K} iteratively with guaranteed convergence to a global optimum, which is also optimal for \eqref{eq.Cs}.

\subsection{Choice of variables}

Since the original variables are positive semi-definite matrices $\bR, \bK$ and the barrier method above requires vectors, we have two options:

1. Use all entries of $\bR, \bK$ as independent variables via $\bx = vec(\bR),\ \by = vec(\bK)$, where operator $vec$ stacks all columns into a single vector. Enforce the symmetry constraints $\bR'=\bR,\ \bK'=\bK$ and the equality constraint on $\bK$ in \eqref{eq.K} via extra equality constraints.

2. Use only lower-triangular entries of $\bR$ as independent variables via $\bx=vech(\bR)$, where $vech$ stacks column-wise all lower-triangular entries into a single column vector, and use only $\bK_{21}$: $\by=vec(\bK_{21})$.

It can be shown that these two options are mathematically equivalent, i.e. produce exactly the same solutions at each step of Newton method. Option 2 is a preferable choice for implementation since the number of variables and constraints is reduced so that it is more efficient. Therefore, we use Option 2 for further exposition. Gradient and Hessian can be evaluated either numerically (in a standard way) or analytically as given below. We find the analytical evaluation to be preferable as numerical one entails a loss of precision while approaching an optimal point (this is especially pronounced at high SNR, large $t$ and for large systems).

Since the algorithm requires initial point to begin with, we use the following point:
\bal
\bR_0 = \frac{P}{m}\bI \rightarrow \bx_0 = vech(\bR_0),\\
\bK_0 = \bI \rightarrow \by_0 = \bo,\\
\blam_0 = \bo
\eal
As can be easily verified, the initial point above is feasible (i.e. satisfies the constraints). The choice of $\bR_0$ is motivated by the fact that isotropic signalling does not prefer any direction and thus is equally good a priori for any channel. $\bK_0$ corresponds to isotropic noise and is motivated by the same reason. It should be emphasized that the algorithm converges for any (feasible) initial point, due to the convex nature of the problem, to a global optimum; the difference is in how fast.

To account for the positive semi-definite constraints $\bR,\bK \ge \bo$, the following barrier function is used
\bal
\label{eq.psi_t(R)}
\psi_t(\bR) = \frac{1}{t} \ln|\bR|
\eal
so that the modified objective $f_t$ is
\bal
\label{eq.f_t(R,K)}
f_t(\bR,\bK) = f(\bR,\bK) + \psi_t(\bR) - \psi_t(\bK)
\eal
Note that this requires $\bK, \bR >\bo$, i.e. they are strictly inside of the feasible set but can approach the boundary arbitrary closely as $t$ increases, so that some eigenvalues may become arbitrary close to zero (and the numerical rank may be deficient); this models numerically the case of singular $\bR$ and/or $\bK$ and is a standard feature of the barrier method in general \cite{Boyd}. The inequality in \eqref{eq.opt.gap.2} makes sure that the optimality gap due to this can be made as small as desired. In a practical implementation, one can round off negligibly-small eigenvalues of $\bR$ to zero to simplify implementation.

After some manipulations (see Appendix for details), the gradients and Hessians can be expressed as:
\bal
\label{eq.nabla_f_t}
\nabla_x f_t = \bD_m' vec(\nabla_R f_t), \ \nabla_y f_t = \widetilde{\bD}_n' vec(\nabla_K f_t),
\eal
\bal \notag
\label{eq.nabla_xx_f_t}
\nabla_{xx}^2 f_t = - \bD_m'(\bZ_1\otimes\bZ_1 &- \bZ_2\otimes\bZ_2\\
 &+ t^{-1}\bR^{-1}\otimes\bR^{-1})\bD_m,
\eal
\bal \notag
\label{eq.nabla_yy_f_t}
\nabla_{yy}^2 f_t =  \widetilde{\bD}_n'(-(\bK &+\bQ)^{-1}\otimes (\bK+\bQ)^{-1}\\
 &+ (1+t^{-1}) \bK^{-1}\otimes \bK^{-1})\widetilde{\bD}_n,
\eal
\bal
\label{eq.nabla_xy_f_t}
\nabla_{xy}^2 f_t &= - \bD_m'(\bH'(\bK+\bQ)^{-1}\otimes \bH'(\bK+\bQ)^{-1})\widetilde{\bD}_n,
\eal
where
\bal
&\nabla_R f_t = \bZ_1 - \bZ_2 + t^{-1} \bR^{-1}, \\
&\nabla_K f_t = (\bK+\bQ)^{-1} - (1+t^{-1})\bK^{-1},\\
&\bZ_1 = (\bI+\bW\bR)^{-1}\bW,\\
&\bZ_2 = (\bI+\bW_2\bR)^{-1}\bW_2,
\eal
and $\otimes$ is a Kronecker product, $\bD_m$ is a $m^2\times m(m+1)/2$ duplication matrix defined from $vec(\bR)=\bD_m vech(\bR)$ \cite{Magnus}\cite{Harville}, $\widetilde{\bD}_n$ is a $n^2\times n_1n_2$ reduced duplication matrix defined from $d\bk = \widetilde{\bD}_n d\tilde{\bk}$, where
\bal \notag
&d\bk = vec(d\bK),\ d\tilde{\bk}=vec(d\bK_{21}), \\
&d\bK = \left(
        \begin{array}{cc}
            \bo & d\bK_{21}' \\
            d\bK_{21} & \bo \\
          \end{array}
        \right)
\eal
and $n=n_1+n_2$. It can be obtained from $\bD_n$ by removing its columns corresponding to all entries of $\bK$ but those in $\bK_{21}$.

It can be shown (see e.g. \cite{Loyka}) that using the full available power is optimal. Therefore, one can use the equality constraint $tr\bR = P$ instead of the inequality $tr\bR \le P$. The equality constraint matrix $\bA$ and vector $\bf{b}$ take the following form:
\bal
\label{eq.A.b}
\bA = [ \ba', \bo'],\ {\bf{b}}=P
\eal
where $\bI_m$ is $m\times m$ identity matrix, $\ba = vech(\bI_m)$, and $\bo$ is $n_1n_2\times 1$ zero vector, i.e. $\bA$ is a row vector and $\bb$ is a scalar in our setting.

With this choice of variables and initial points, Algorithm 3, in combinations with Algorithms 1 and 2, can now be used to solve numerically the minimax problem in \eqref{eq.Cs.R.K}.

\subsection{Convergence of the algorithm}

Here, we provide a proof of convergence of the proposed algorithm to a global optimum. First, one has to insure that Newton step is well defined for all $t, \bR, \bK >\bo$. This, in turn, insures that the Newton method produces a sequence of decreasing-norm residuals (according to \eqref{eq.r.dt}), which converge to zero for each $t$. Consequently, the minimax barrier method applied to our problem generates a sequence of sub-optimal points $\bz^*(t)$ that converges to a global optimum (a solution of the sufficient KKT conditions in \eqref{eq.KKT1}-\eqref{eq.KKT2}) as $t$ increases, since $f_t(\bR,\bK)$ is convex in $\bK$ and concave in $\bR$ and also twice continuously differentiable for each $\bR > \bo,\ \bK>\bo$ (more details can be found in \cite{Boyd}).

To make sure that Newton step is well defined for each $t, \bR, \bK >\bo$, we demonstrate that the KKT matrix for the modified objective $f_t$ is non-singular, so that the Newton equations have a well-defined solution as in \eqref{eq.dw}.

\begin{prop}
Consider the minimax problem in \eqref{eq.max-min P} for the objective in \eqref{eq.f_t(R,K)} under the equality constraint parameters as in \eqref{eq.A.b}. Its KKT matrix
\bal
\label{eq.KKT.matrix}
\bT=
\left[
\begin{array}{cc}
  \nabla^2 f_t & \bA' \\
  \bA & \bf{0} \\
\end{array}
\right]
\eal
is non-singular for each $t>0,\ \bR,\bK>\bo$.
\end{prop}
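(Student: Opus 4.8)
The plan is to exploit the block structure of $\bT$ and to certify non-singularity by a block-partitioned (Schur-complement) factorization, rather than by a definiteness argument on $\ker\bA$, which is unavailable here because $\nabla^2 f_t$ is indefinite. Since the equality-constraint data $\bA=[\ba',\bo']$ in \eqref{eq.A.b} acts only on $\bx=vech(\bR)$, the matrix $\bT$ has the partitioned form
\bal \notag
\bT=
\left[
\begin{array}{ccc}
\nabla_{xx}^2 f_t & \nabla_{xy}^2 f_t & \ba \\
\nabla_{yx}^2 f_t & \nabla_{yy}^2 f_t & \bo \\
\ba' & \bo' & 0
\end{array}
\right],
\eal
whose two diagonal Hessian blocks carry \emph{opposite} definite signs ($f$ is concave in $\bR$ and convex in $\bK$); this is precisely the source of the indefiniteness. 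The idea is to eliminate the strictly convex $\by$-block first, reducing $\bT$ to a single negative-definite saddle system whose determinant can be signed explicitly.

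First I would establish the strict sign-definiteness of the diagonal blocks for every $t>0$, $\bR,\bK>\bo$. For the $\by$-block, $\bQ=\bH\bR\bH'\ge\bo$ gives $\bK+\bQ\ge\bK>\bo$, hence $(\bK+\bQ)^{-1}\le\bK^{-1}$ and $(\bK+\bQ)^{-1}\otimes(\bK+\bQ)^{-1}\le\bK^{-1}\otimes\bK^{-1}$; inserting this into \eqref{eq.nabla_yy_f_t} bounds the inner matrix below by the strictly positive $t^{-1}\bK^{-1}\otimes\bK^{-1}$, so that $\nabla_{yy}^2 f_t>\bo$ because $\widetilde{\bD}_n$ has full column rank. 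For the $\bx$-block, concavity of $f$ in $\bR$ (Remark 1) makes $\bD_m'(\bZ_1\otimes\bZ_1-\bZ_2\otimes\bZ_2)\bD_m$ positive semidefinite, while the barrier adds the strictly positive $t^{-1}\bD_m'(\bR^{-1}\otimes\bR^{-1})\bD_m$; with $\bD_m$ of full column rank this yields $\nabla_{xx}^2 f_t<\bo$ from \eqref{eq.nabla_xx_f_t}.

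Next, because $\nabla_{yy}^2 f_t>\bo$ is invertible, I would eliminate the $\by$-block and form the Schur complement
\bal \notag
\bS=\nabla_{xx}^2 f_t-\nabla_{xy}^2 f_t\,(\nabla_{yy}^2 f_t)^{-1}\,\nabla_{yx}^2 f_t .
\eal
Using $\nabla_{yx}^2 f_t=(\nabla_{xy}^2 f_t)'$ and $\nabla_{yy}^2 f_t>\bo$, the correction term is positive semidefinite, so $\bS\le\nabla_{xx}^2 f_t<\bo$, i.e. $\bS$ is negative definite. The factorization then gives
\bal \notag
\det\bT=\det(\nabla_{yy}^2 f_t)\,\det
\left[
\begin{array}{cc}
\bS & \ba \\
\ba' & 0
\end{array}
\right]
=-\det(\nabla_{yy}^2 f_t)\,\det(\bS)\,\ba'\bS^{-1}\ba .
\eal
Here $\det(\nabla_{yy}^2 f_t)\ne0$, $\det(\bS)\ne0$, and $\ba'\bS^{-1}\ba<0$ since $\bS^{-1}<\bo$ and $\ba=vech(\bI_m)\ne\bo$; hence $\det\bT\ne0$. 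Equivalently, a null vector $\bu=(\bu_x',\bu_y',\mu)'$ of $\bT$ satisfies $\bu_y=-(\nabla_{yy}^2 f_t)^{-1}\nabla_{yx}^2 f_t\,\bu_x$, then $\bS\bu_x+\mu\ba=\bo$ and $\ba'\bu_x=0$, forcing $\mu\,\ba'\bS^{-1}\ba=0$, so $\mu=0$, $\bu_x=\bo$, $\bu_y=\bo$.

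The main obstacle is conceptual rather than computational: the opposite signs of the diagonal blocks make $\nabla^2 f_t$ indefinite, so the textbook criterion ``$\nabla^2 f_t$ positive definite on $\ker\bA$'' does not apply, and it is the one-sided elimination of the strictly convex $\by$-block that restores a definite (negative-definite) reduced matrix on which the scalar saddle determinant $-\det(\bS)\,\ba'\bS^{-1}\ba$ can be signed. A secondary point needing care is the \emph{strict} negative-definiteness of $\nabla_{xx}^2 f_t$: mere concavity of $f$ in $\bR$ gives only negative \emph{semi}-definiteness and could leave $\bS$ (and hence $\bT$) singular, so the logarithmic-barrier term $t^{-1}\bR^{-1}\otimes\bR^{-1}$, together with the full column rank of $\bD_m$, is exactly what upgrades this to the strict inequality used throughout.
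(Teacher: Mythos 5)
Your proof is correct and follows essentially the same route as the paper's: the paper's Lemmas~1--3 likewise establish $\nabla_{xx}^2 f_t<\bo$ and $\nabla_{yy}^2 f_t>\bo$ via the barrier terms, form the same Schur complement ($\bS_{22}=-\bH_{11}-\bH_{21}'\bH_{22}^{-1}\bH_{21}$, which equals your $\bS$) to get $[\breve{\bH}^{-1}]_{11}=\bS_{22}^{-1}<\bo$, and conclude via $|\bT|=|\breve{\bH}|(-\ba'[\breve{\bH}^{-1}]_{11}\ba)\neq 0$, identical to your signed determinant. The only (minor) difference is that you invoke the concavity of $f$ in $\bR$ from Remark~1 to get $\bZ_1\otimes\bZ_1-\bZ_2\otimes\bZ_2\ge\bo$, whereas the Appendix proves this self-containedly via $\bW=\bH'\bK^{-1}\bH\ge\bW_2$ and hence $\bZ_1\ge\bZ_2$.
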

\begin{proof}
The proof is based on the following three Lemmas.

\begin{lemma}
The Hessian
\bal
\nabla^2 f_t = \breve{\bH} =
\left[
\begin{array}{cc}
  -\bH_{11} & \bH_{12}\\
  \bH_{21} & \bH_{22}\\
\end{array}
\right]
\eal
is non-singular if partial Hessians $\bH_{11}, \bH_{22}$ are non-singular, i.e. if $\bH_{11}, \bH_{22} > \bo $, where $\bH_{11}=-\nabla^2_{xx} f_t,\ \bH_{12}=\nabla^2_{xy} f_t,\ \bH_{21}=\bH_{12}'=\nabla^2_{yx} f_t,\ \bH_{22}=\nabla^2_{yy} f_t$. Furthermore, block $(1,1)$ $[\breve{\bH}^{-1}]_{11}$ of the inverse $\breve{\bH}^{-1}$ is also non-singular.
\end{lemma}
\begin{proof}
The proof is complicated by the fact that $\nabla^2 f_t$ is indefinite matrix, since $f_t$ is concave in $\bx$ and convex in $\by$ (i.e. $\nabla^2_{xx} f_t \le \bo,\ \nabla^2_{yy} f_t \ge \bo$), so that the standard proofs tailored for positive definite matrices \cite{Horn-1} do not apply here.
However, since $\bH_{11}, \bH_{22} > \bo$, it follows that
\bal \notag
\bS_{22} &= -\bH_{11} - \bH_{21}'\bH_{22}^{-1}\bH_{21} < \bo,\\  \bS_{11} &= \bH_{22} + \bH_{21}\bH_{11}^{-1}\bH_{21}' > \bo,
\eal
where $\bS_{11(22)}$ is Schur complement of $-\bH_{11}(\bH_{22})$, so that the matrix inversion Lemma in Proposition 2.8.7 of \cite{Bernstein} applies  and one can invert $\breve{\bH}$ as follows \footnote{This idea of the proof was suggested by a reviewer.}
\bal \notag
\breve{\bH}^{-1} &=
\left[
\begin{array}{cc}
  -\bH_{11} & \bH_{21}'\\
  \bH_{21} & \bH_{22}\\
\end{array}
\right]^{-1}\\
&= \left[
\begin{array}{cc}
   \bS_{22}^{-1} & -\bS_{22}^{-1} \bH_{21}'\bH_{22}^{-1}\\
  \bS_{11}^{-1}\bH_{21}\bH_{11}^{-1} & \bS_{11}^{-1}\\
\end{array}
\right]
\eal
which implies that $\breve{\bH}$ is non-singular and that $[\breve{\bH}^{-1}]_{11} = \bS_{22}^{-1} < \bo$.
\end{proof}

\begin{lemma}
The KKT matrix in Proposition 2 is non-singular under the conditions of Lemma 1.
\end{lemma}
\begin{proof}
%First, we note that the result on non-singularity of KKT matrix given in \cite{Boyd} in the context of convex-concave games does not apply here since we have an equality constraint (absent in \cite{Boyd} in that context); the non-singularity result for regular minimization under equality constrain in \cite{Boyd} does not apply either since our Hessian is not positive semi-definite.
We proceed as follows. Since the Hessian $\nabla^2 f_t = \breve{\bH}$ is non-singular (under conditions of Lemma 1), let us apply the following transformation that preserves the determinant of $\bT$:
\bal \notag
\widetilde{\bT} &=
\left[
\begin{array}{cc}
  \breve{\bH} & \bA' \\
  \bA & \bf{0} \\
\end{array}
\right]
\left[
\begin{array}{cc}
  \bI & -\breve{\bH}^{-1}\bA' \\
  \bo & \bI\\
\end{array}
\right]\\
&=
\left[
\begin{array}{cc}
  \breve{\bH} & \bo \\
  \bA & -\bA\breve{\bH}^{-1}\bA'\\
\end{array}
\right],
\eal
and observe that
\bal
\label{eq.Lemma.2.2}
|\widetilde{\bT}| = |\bT|=|\breve{\bH}|(-\bA\breve{\bH}^{-1}\bA')
\eal
(this follows from the properties of block-partitioned matrices and their determinants, see e.g. \cite{Zhang}). From Lemma 1, $|\breve{\bH}|\neq 0$. Further notice that $\bA\breve{\bH}^{-1}\bA' = \ba'[\breve{\bH}^{-1}]_{11}\ba < 0$, since $[\breve{\bH}^{-1}]_{11} < \bo$ from Lemma 1 and $\ba \neq 0$.
Using \eqref{eq.Lemma.2.2}, $|\bT|=|\breve{\bH}|(-\bA\breve{\bH}^{-1}\bA') \neq 0$ so that the KKT matrix $\bT$ is non-singular.
\end{proof}

Thus, Lemmas 1 and 2 establish the non-singularity of KKT matrix provided that partial Hessians $\nabla^2_{xx} f_t,\ \nabla^2_{yy} f_t$ are non-singular. This is indeed the case as Lemma 3 below shows.

\begin{lemma}
Partial Hessian $\nabla^2_{xx} f_t,\ \nabla^2_{yy} f_t$ in \eqref{eq.nabla_xx_f_t} and \eqref{eq.nabla_yy_f_t} are non-singular for each $t>0,\ \bR,\bK > \bo$.
\end{lemma}
\begin{proof}
See Appendix.
\end{proof}

Combining Lemmas 1-3, Proposition 2 follows.
\end{proof}

Thus, Proposition 2 insures that Newton step is always well-defined and hence generates a sequence of decreasing-norm residuals (according to \eqref{eq.r.dt}) which converges to zero for each $t>0$. The next proposition specifies the optimality gap of the minimax barrier method for a given $t$.

\begin{prop}
For each $t>0$, the optimality gap of the barrier method applied  to the minimax problem in \eqref{eq.Cs.R.K}  can be upper bounded as follows:
\bal
\label{eq.opt.gap.2}
|f(\bR^*(t),\bK^*(t)) - C_s| \le \max(m,n_1+n_2)/t
\eal
where $\bR^*(t),\bK^*(t)$ are the optimal signal and noise covariance matrices returned by the barrier method for a given $t$.
\end{prop}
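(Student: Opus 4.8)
The plan is to bound the optimality gap by tracking how the barrier functions perturb the optimal value, following the standard barrier-method analysis adapted to the minimax setting. The key quantity to control is the difference between the value $f(\bR^*(t),\bK^*(t))$ achieved at the barrier-optimal point and the true optimum $C_s$. Since the barrier method solves $\max_{\bR}\min_{\bK} f_t(\bR,\bK)$ rather than the true problem, I would first exploit the fact that at the barrier-optimal point the \emph{modified} KKT conditions hold, which encode a relationship between the primal/dual variables and the barrier parameter $t$.

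First I would recall that for a single convex problem with logarithmic barriers, the barrier method yields a gap bounded by (number of inequality constraints)$/t$. Here the positive semidefinite constraints $\bR\ge\bo$ and $\bK\ge\bo$ are enforced by $\psi_t(\bR)=t^{-1}\ln|\bR|$ and $\psi_t(\bK)=t^{-1}\ln|\bK|$, and the logarithmic determinant barrier for an $m\times m$ (resp.\ $n\times n$) positive definite matrix behaves, in the self-concordance sense, like $m$ (resp.\ $n=n_1+n_2$) scalar log-barriers stacked together. This is why the bound carries the factor $\max(m,n_1+n_2)$ rather than a single $1$: the $\max$ reflects that the maximization over $\bR$ contributes a gap of order $m/t$ while the minimization over $\bK$ contributes a gap of order $n/t$, and since the two optimizations push in opposite directions one takes the dominant of the two.

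The central step is to convert the stationarity conditions for $f_t$ into a duality-gap estimate. At the barrier optimum, $\nabla_R f_t=\bo$ and $\nabla_K f_t=\bo$ give, from the expressions $\nabla_R f_t=\bZ_1-\bZ_2+t^{-1}\bR^{-1}$ and $\nabla_K f_t=(\bK+\bQ)^{-1}-(1+t^{-1})\bK^{-1}$ (together with the power-constraint multiplier), that the gradient of the \emph{true} objective $f$ equals $-t^{-1}\bR^{-1}$ plus a term eliminated by the equality constraint, and similarly for $\bK$. I would then construct a dual-feasible point by reading off the effective Lagrange multipliers $\bM_2 = t^{-1}\bR^{-1}$ and $\bM_1 = t^{-1}\bK^{-1}$ that the barrier implicitly produces, verify that these satisfy the original KKT stationarity conditions in \eqref{eq.KKT1}--\eqref{eq.KKT2} exactly, and evaluate the residual of the complementary-slackness terms. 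Because $tr(\bM_2\bR) = t^{-1}\,tr(\bR^{-1}\bR) = t^{-1}\,tr\,\bI_m = m/t$ and $tr(\bM_1\bK) = t^{-1}\,tr\,\bI_n = n/t$, each complementary-slackness violation is exactly the matrix dimension divided by $t$, and these are precisely the surrogate duality gaps contributed by each optimization. The bound \eqref{eq.opt.gap.2} then follows by combining the weak-duality inequality on the $\bR$-side (gap $\le m/t$) with the one on the $\bK$-side (gap $\le n/t$) and invoking the saddle-point property in \eqref{eq.f(bR*,bK*)} that ties the two together into a single two-sided bound on $|f(\bR^*(t),\bK^*(t))-C_s|$.

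\textbf{The main obstacle} will be handling the two conflicting optimizations simultaneously: in the standard single-optimization barrier analysis there is one surrogate gap, but here the maximization over $\bR$ and the minimization over $\bK$ each generate their own gap of differing magnitude ($m/t$ versus $n/t$), and I must argue carefully, using the saddle-point inequalities \eqref{eq.f(bR*,bK*)}, that the \emph{net} deviation of $f(\bR^*(t),\bK^*(t))$ from the saddle value $C_s$ is governed by the larger of the two rather than their sum. Concretely, I expect to sandwich $f(\bR^*(t),\bK^*(t))$ between $C_s-n/t$ and $C_s+m/t$ (or the analogous pair), so that the absolute deviation is at most $\max(m,n_1+n_2)/t$; verifying that the cross-terms arising from the off-diagonal structure of $\bK$ in \eqref{eq.K} and the reduced duplication matrix $\widetilde{\bD}_n$ do not inflate the trace $tr(\bM_1\bK)$ beyond $n/t$ is the delicate bookkeeping that must be checked.
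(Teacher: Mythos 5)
Your proposal is correct and follows essentially the same route as the paper: establish the per-coordinate barrier gaps $m/t$ (for the maximization over $\bR$) and $(n_1+n_2)/t$ (for the minimization over $\bK$), then sandwich $f(\bR^*(t),\bK^*(t))$ between $C_s-m/t$ and $C_s+(n_1+n_2)/t$ using the minimax equality. The only difference is that the paper simply cites these two gap inequalities from the Ghosh--Boyd minimax notes, whereas you derive them via the implicit dual multipliers $\bM_2=t^{-1}\bR^{-1}$, $\bM_1=t^{-1}\bK^{-1}$ and the surrogate complementary-slackness traces $tr(\bM_2\bR)=m/t$, $tr(\bM_1\bK)=(n_1+n_2)/t$, which is exactly how those cited bounds are proved.
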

\begin{proof}
Using the bounds for the minimax problem in \cite{Ghosh} and adopting them to the problem in \eqref{eq.Cs.R.K}, one obtains
\bal
\max_R f(\bR,\bK^*(t))-m/t &\le f(\bR^*(t),\bK^*(t))\\ \notag
 &\le \min_K f(\bR^*(t),\bK)+(n_1+n_2)/t
\eal
so that
\bal \notag
f(\bR^*(t),\bK^*(t)) &\le \min_K f(\bR^*(t),\bK)+(n_1+n_2)/t \\ \notag
&\le \max_R \min_K f(\bR,\bK)+(n_1+n_2)/t\\
&= C_s + (n_1+n_2)/t,\\
f(\bR^*(t),\bK^*(t)) &\ge \max_R f(\bR,\bK^*(t))-m/t\\ \notag
&\ge \min_K \max_R f(\bR,\bK)-m/t = C_s -m/t
\eal
from which \eqref{eq.opt.gap.2} follows.
\end{proof}

Therefore, using sufficiently large barrier parameter $t$ insures any desired accuracy, and $f(\bR^*(t),\bK^*(t)) \rightarrow C_s $ as $t \rightarrow \infty$. If desired accuracy is $\epsilon$, then the stopping criterion in Algorithm 3 should be $\max(m,n_1+n_2)/t < \epsilon$ (assuming that the Newton method produces sufficiently-accurate solution, which is always the case in practice due to its quadratic convergence, see \cite{Boyd}).

\subsection{Dual Problem}

While the algorithm above is designed to maximize the secrecy rate, its optimal covariance also solves the dual problem of minimizing the total transmit power subject to the secrecy rate constraint $C(\bR) \ge R_s$, i.e.
\bal
\label{eq.dual.problem}
\min\ tr\bR \ \mbox{\ s.t.}\ \ C(\bR) \ge R_s,\ \bR \ge 0
\eal
This can be easily shown by contradiction and observing that 1st inequality in \eqref{eq.dual.problem} always holds with equality, or by comparing the respective KKT conditions (which are necessary for optimality in both problems), both under the condition $R_s=C_s$.

\subsection{Per-antenna Power Constraints}
Different forms of power constraint can also be incorporated into the proposed algorithm in a straightforward way. In particular, the per-antenna power constraint $r_{ii} \le P_i$, where $r_{ii}$ is $i$-th diagonal entry of $\bR$ (power in antenna $i$) and $P_i$ is the maximum power of $i$-th antenna, can be adopted by eliminating matrix $\bA$ from the KKT equations and adding $m$ extra barrier terms $t^{-1} \ln (P_i - r_{ii})$ representing new power constraints in \eqref{eq.f_t(R,K)}. As a starting point, one can use e.g. $r_{ii}=P_i/2$.

In fact, these new constraints can be added to the existing ones as well, representing the scenario where not only the total power budget is limited but also the per-antenna powers are limited due to e.g. limited dynamic range of power amplifiers.

The convergence of this modified algorithm to a global optimum can be proved in the same way as above (with minor modifications). In particular, one can observe that the new barrier terms preserve the non-singularity of the KKT matrix and the convex nature of the problem.

\section{Degraded Channel}
If the channel is degraded, $\bW_1 \ge \bW_2$, then $C(\bR)$ is concave and the corresponding optimization problem in \eqref{eq.Cs} is convex. Therefore, the barrier method can be applied directly to this problem with guaranteed convergence to a global optimum. This reduces the problem complexity since there is no minimization over $\bK$ so that the number of variables reduces from $m(m+1)/2 +n_1 n_2$ to $m(m+1)/2$, which is a significant improvement when $n_1 n_2$ is large.

The modified objective (with the barrier term) becomes
\bal
f_t(\bR) = C(\bR) + \psi_t(\bR),
\eal
the variables are $\bz=\bx=vech(\bR)$ (no $\by$) and the equality constraint parameters are
\bal
\bA=\ba'=vech(\bI),\ \bb=P,
\eal

Non-singularity of the KKT matrix, which guarantees well-defined Newton steps, can be established  following the lines of the analysis in Section V. In particular, one observes that Lemmas 1-3 hold. Lemma 3 holds since
\bal
\nabla_{xx}^2 f_t < \bo
\eal
Lemma 1 holds since the Hessian in this case is $\breve{\bH}=\nabla_{xx}^2 f_t$. Lemma 2 holds since
\bal
\ba'\breve{\bH}^{-1}\ba < 0
\eal
so that the KKT matrix is non-singular and thus KKT conditions have a well-defined solution at each step of the barrier method.

The optimality gap in this case becomes
\bal
\label{eq.opt.gap.3}
|C(\bR^*(t)) - C_s| \le m/t
\eal
where $\bR^*(t)$ is an optimal $\bR$ returned by the Newton method for a given $t$, i.e. it is smaller for the same $t$ than in the non-degraded case \eqref{eq.opt.gap.2}, which is an extra advantage (in addition to having less variables). For desired accuracy $\epsilon$, the stopping criterion in Algorithm 3 is $m/t < \epsilon$.

As a side remark, we note that even though the problem is convex in this case, existing convex solvers (see e.g. \cite{Grant'06}-\cite{CVX}) cannot be used to solve it directly since they do not allow difference of logarithms or matrix powers in objective/constraint functions, while the algorithm above solves it with guaranteed convergence to a global optimum.

\section{Numerical Experiments}

To validate the algorithm and analysis and to demonstrate the performance of the algorithm, extensive numerical experiments have been carried out. Some of the representative results are shown below.

\begin{figure}[t]%[htbp]
\label{fig.1}
\centerline{\includegraphics[width=3.3in]{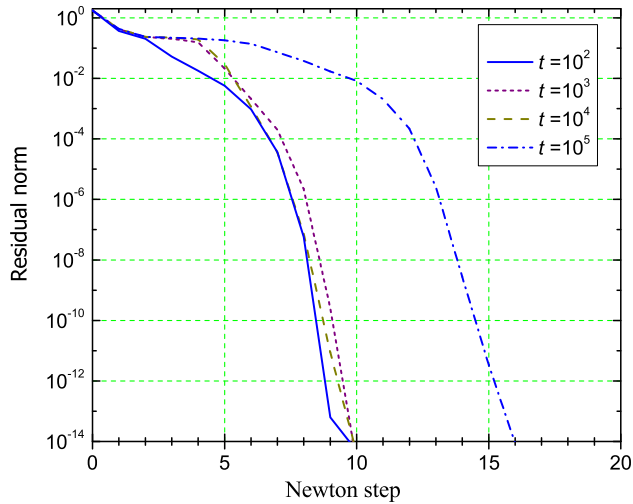}}
\caption{Convergence of the Newton method for different values of $t$; $m=2, P=10$, $\alpha=0.3, \beta=0.5$, $\bH_1, \bH_2$ as in \eqref{eq.H.example.1}. Note the presence of two convergence phases: linear and quadratic. It takes only about 10 to 20 Newton steps to reach the machine precision level.}
\end{figure}

\begin{figure}[t]%[htbp]
\label{fig.2}
\centerline{\includegraphics[width=3.3in]{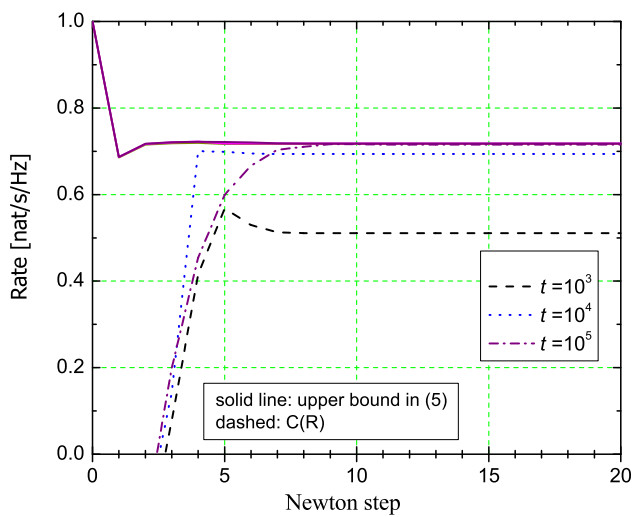}}
\caption{Secrecy rates for the same setting as in Fig. 2. Solid line - via the upper bound in \eqref{eq.f(R,K)} (the lines coincide for different $t$), dashed - via $C(\bR)$ in \eqref{eq.C(R)}.}
\end{figure}

Convergence of the Newton method for different values of the barrier parameter $t$ is demonstrated in Fig.2 for
\bal \notag
\label{eq.H.example.1}
\bH_1 &=
\left[
\begin{array}{cc}
    0.77 & -0.30\\
   -0.32 & -0.64\\
\end{array}
\right],\\
\bH_2 &=
\left[
\begin{array}{cc}
    0.54 & -0.11\\
   -0.93 & -1.71\\
\end{array}
\right],
\eal
which shows the residual $\br$ Euclidian norm versus Newton steps. Even though this channel is not degraded, since the eigenvalues of $\bW_1-\bW_2$ are $\{0.395, -3.293\}$, the algorithm does find the global optimum (this particular channel was selected because it is "difficult" for optimization). Note the presence of two convergence phases: linear and quadratic, which is typical for Newton method in general. After the quadratic phase is reached, the convergence is very fast (water-fall region). It takes about 10-20 Newton steps to reach very low residual (at the level of machine precision). This is in agreement with the observations in \cite{Boyd} (although obtained for different problems).

\begin{figure}[t]%[htbp]
\label{fig.3}
\centerline{\includegraphics[width=3.3in]{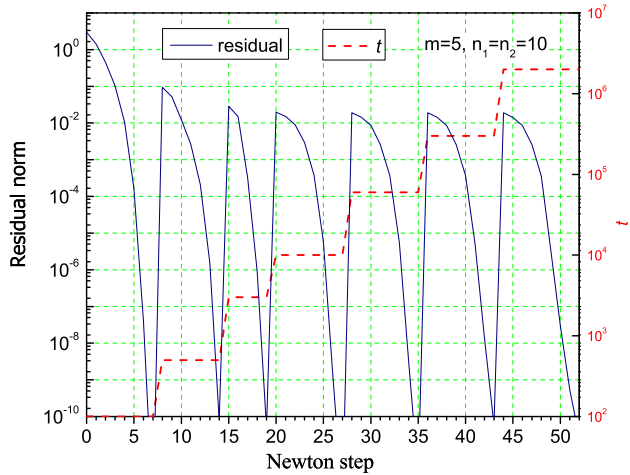}}
\caption{Convergence of the barrier method (incrementally increasing $t$); $m=5, n_1=n_2=10, P=10$, $\alpha=0.3, \beta=0.5$, $\mu=5$, $\bH_1, \bH_2$ are randomly generated (i.i.d. Gaussian entries of zero mean and unit variance). It takes about 5 to 10 steps to reduce the residual to a very low value of $10^{-10}$ for each $t$. }
\end{figure}

\begin{figure}[t]%[htbp]
\label{fig.4}
\centerline{\includegraphics[width=3.3in]{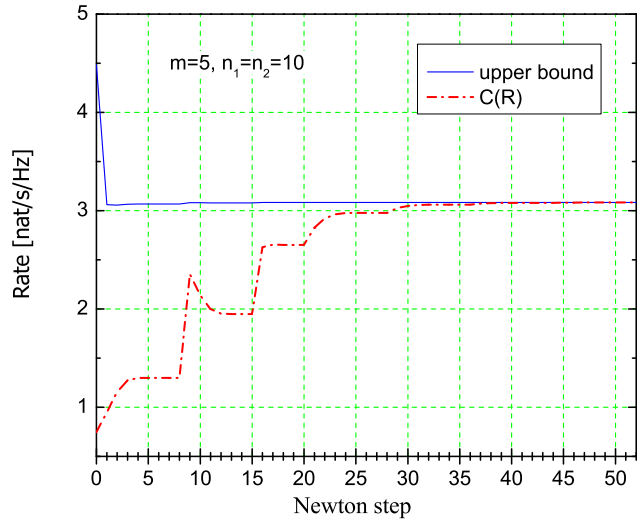}}
\caption{Secrecy rates for the same setting as in Fig. 4. Solid line - via the upper bound in \eqref{eq.f(R,K)}, dashed - via $C(\bR)$ in \eqref{eq.C(R)}. Note that while the capacity value evaluated via the upper bound converges very fast, significantly more iterations are required for convergence of the secrecy rate $C(\bR)$. We attribute this to the fact that $C(\bR)$ is more sensitive to $\bR$ than $f(\bR,\bK)$ is. Also note the significantly non-monotonique behavior of the former. }
\end{figure}

Fig. 3 shows the corresponding secrecy rate evaluated via the upper bound in \eqref{eq.f(R,K)} and the actual achievable rate via $C(\bR(t))$ in \eqref{eq.C(R)}, where $\bR(t)$ is an optimal covariance  at a particular step of the Newton method and for a given $t$. As the algorithm converges, they become almost equal if $t$ is sufficiently large (in this case, about $10^4...10^5$). While $t$ has negligible impact on the upper bound, it does affect significantly the corresponding $C(\bR(t))$ (since the optimal covariance $\bR(t)$ returned by the barrier method depends on $t$ and $C(\bR)$ is sensitive to $\bR$), so that the choice of $t$ is not critical if the secrecy capacity is the only quantity of interest (since the upper bound is quite tight even for moderate $t$). However, if a transmitter is implemented with the optimal covariance $\bR(t)$ returned by the algorithm, it is $C(\bR(t))$ that determines the achievable rate and this choice is important. We attribute this fact to higher sensitivity of $C(\bR)$ to $\bR$ compared to that of $f(\bR,\bK)$. Similar observations apply to the number of Newton steps required to achieve a certain performance: if $C_s$ is the quantity of interest, the upper bound converges to it in about 3-5 steps. However, if implementing $\bR$ is involved, one should use $C(\bR)$ and, in addition to proper choice of $t$, it takes about 5...10 steps to achieve the convergence. Note that, in both cases, the number of steps is not large and the execution  time is small (a few seconds). In general, larger $t$ and $m, n_1, n_2$ require more steps to achieve the same accuracy. As expected, the behavior of upper bound is not monotonic while the residual norm does decrease monotonically in each step.

Fig. 4 and 5 demonstrate the convergence of the minimax barrier method (incrementally increasing $t$) for a larger system ($m=5$, $n_1=n_2=10$). Note that a very low residual value of $10^{-10}$ is achieved after about 7 Newton steps for each value of $t$. Using incrementally-increasing $t$ as opposed to a fixed large value results in a smaller number of the total Newton steps required to achieve a given residual value and is less sensitive to system parameters and size. Also observe from Fig. 5 that while the upper bound converges quite fast (in a few Newton steps), it takes significantly more steps for $C(\bR)$ to converge and the convergence process is significantly non-monotonic.

To demonstrate the convergence performance for different channel realizations, Fig. 6 and 7 show the distribution (histograms) of the number of steps required to achieve the residual of $10^{-10}$ and $10^{-8}$ for 100 randomly-generated channels (with i.i.d. Gaussian entries of zero mean and unit variance) for $m=4, n_1=n_2=3$ and $m=5, n_1=n_2=10$ systems. While the actual number of required steps depends on a particular channel realization, 20 to 40 steps are sufficient in most cases. We attribute this to the two-phase behaviour of the algorithm's convergence: once the quadratic (water-fall) phase is reached, it takes just a few steps to reduce the residual to a very low value (which is consistent with similar observations in \cite{Boyd}, albeit for different problems). Different channel realizations result in a different number of required steps for the linear phase, before the quadratic phase is reached, but do not affect much the latter.

\begin{figure}[t]%[htbp]
\label{fig.5}
\centerline{\includegraphics[width=3.4in]{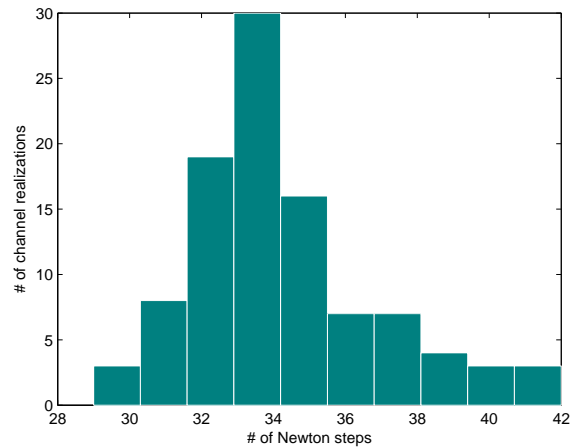}}
\caption{ A histogram showing the distribution of the number of Newton steps needed to achieve the residual of $10^{-10}$ via the minimax barrier method for 100 randomly generated channels (i.i.d. Gaussian entries of zero mean and unit variance); $P=10$, $\alpha=0.3, \beta=0.5$, $m=4, n_1=n_2=3$, $t_0=100,\ t_{max}=10^5$, $\mu=10$. }
\end{figure}

\begin{figure}[t]%[htbp]
\label{fig.6}
\centerline{\includegraphics[width=3.44in]{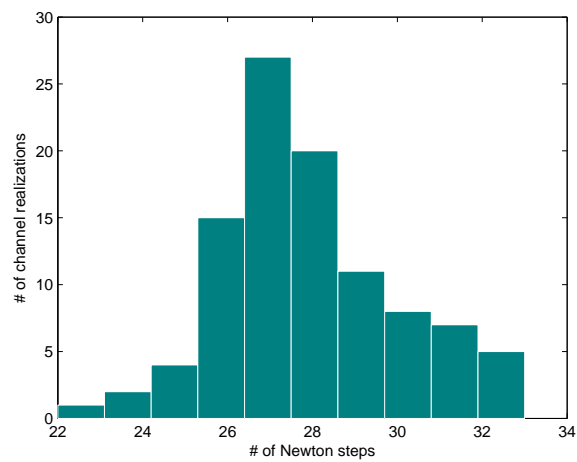}}
\caption{ A histogram showing the distribution of the number of Newton steps needed to achieve the residual of $10^{-8}$ for 100 randomly generated channels (i.i.d. Gaussian entries of zero mean and unit variance); $m=5, n_1=n_2=10$, $t_0=100,\ t_{max}=10^5$, $\mu=10$, $P=10$, $\alpha=0.3, \beta=0.5$. }
\end{figure}

\section{Conclusion}

Global secrecy rate maximization for (non-degraded) Gaussian MIMO-WTC has been discussed. The problem is challenging due to its non-convex nature and no analytical solution is known for this setting. While the known numerical algorithms converge to a stationary point (which may be a local rather than global maximum or just a saddle point), the algorithm proposed herein is guaranteed to converge to a \textit{global} rather than \textit{local} maximum. The algorithm is based on the minimax reformulation of the secrecy capacity problem (to insure global convergence) and the primal-dual reformulation of the Newton method in combination with the barrier method. A proof of its global convergence is also given. Numerical experiments indicate that 20 to 40 Newton steps are sufficient for convergence with high precision (up to the machine precision level). Extra power constraints (e.g. maximum per-antenna power) can be easily incorporated in the algorithm. The dual problem of total power minimization subject to the secrecy rate constraint can also be solved.

%\vspace*{0.5\baselineskip}
%\vspace*{0.5\baselineskip}
\section{Acknowledgement}
The authors would like to thank A.Khisti for numerous stimulating and insightful discussions, and the reviewers for constructive comments and suggestions.

\section{Appendix}

\subsection{Gradients and Hessians}

To derive the gradient and Hessian expressions, we use the tools of matrix differential calculus \cite{Magnus}\cite{Harville}. Let us consider $f(\bX)=\ln|\bX|$, where $\bX>\bo$ is $n\times n$ positive definite matrix. Using the perturbation method,
\bal
\label{eq.A1.1}
f(&\bX +d\bX)= \ln|\bX|+\ln|\bI+\bX^{-1}d\bX|\\ \notag
&= f(\bX) +\sum_i \lambda_i(\bX^{-1}d\bX) -\frac{1}{2}\lambda_i^2(\bX^{-1}d\bX)  + o(\lambda_i^2)\\ \notag
&= f(\bX)+tr(\bX^{-1}d\bX) -\frac{1}{2} tr(\bX^{-1}d\bX\bX^{-1}d\bX) + o(\{\lambda_i^2\})
\eal
Using
\bal \notag
tr(\bX^{-1}d\bX) &= vec(d\bX)'vec(\bX^{-1})\\
&=d\bx'\bD_n'vec(\bX^{-1})
\eal
where $d\bx=vech(d\bX)$, one obtains the gradient $\nabla_x f = \bD_n'vec(\bX^{-1})$. Applying this to
\bal
f(\bK)=\ln|\bI+\bK^{-1}\bQ|=\ln|\bK+\bQ|-\ln|\bK|,
\eal
$\nabla_y f_t$ follows. Using
\bal \notag
tr(\bX^{-1}d\bX\bX^{-1}d\bX) &= vec(d\bX)'(\bX^{-1}\otimes\bX^{-1})vec(d\bX)\\
 &= d\bx'D_n'(\bX^{-1}\otimes\bX^{-1})\bD_nd\bx
\eal
the Hessian $\nabla_{xx}^2 f$ can be identified as
\bal
\nabla_{xx}^2 f = - \bD_n'(\bX^{-1}\otimes\bX^{-1})\bD_n
\eal
Applying this to $f(\bK)$, $\nabla_y^2 f_t$ follows.

To derive $\nabla_x f_t$ and $\nabla_x^2 f_t$, use a modification of \eqref{eq.A1.1} for $f(\bR)=\ln|\bI+\bW\bR|$:
\bal \notag
\label{eq.A1.6}
f(\bR+d\bR)= f(\bR) &+tr(\bZ d\bR) -\frac{1}{2} tr(\bZ d\bR \bZ d\bR)\\
&+ \sum_i o(\lambda_i^2)
\eal
where $\bZ=(\bI+\bW\bR)^{-1}\bW$, so that
\bal
\nabla_r f_t &= \bD_m' vec(\bZ)
\eal
where $\br = vech(\bR)$, and
\bal
\nabla_r^2 f_t &= -\bD_m' (\bZ\otimes \bZ) \bD_m
\eal
from which \eqref{eq.nabla_f_t}, \eqref{eq.nabla_xx_f_t} follow, where we have used the following identities \cite{Magnus}:
\bal \notag
&tr(\bA\bB) = vec(\bA')'vec(\bB),\\
 &tr(\bA\bB\bC\bD)= (vec\bD)'(\bA\otimes\bC') vec(\bB')
\eal
and the fact that $\bZ$ is symmetric, $\bZ'=\bZ$.
To derive $\nabla_{xy}^2 f_t$, observe that
\bal
\nabla_{kr}^2 f(\bR,\bK) = \nabla_{kr}^2 \ln|\bK+\bH\bR\bH'|
\eal
where $d\bk=vec(d\bK)$, so that one needs to consider only
\bal
\tilde{f}(\bR,\bK) = \ln|\bK+\bH\bR\bH'|
\eal
for which the perturbation method gives
\bal \notag
\tilde{f}(\bR &+d\bR,\bK+d\bK) = \tilde{f}(\bR,\bK)\\
 &- tr(\bH'(\bK+\bQ)^{-1} d\bK (\bK+\bQ)^{-1}\bH d\bR)
+ \Delta\tilde{f}
\eal
where $\Delta\tilde{f}$ denotes all other terms (which do not affect the mixed derivatives), from which \eqref{eq.nabla_xy_f_t} follows by using $vec$ operator inside the trace.

\subsection{Proof of Lemma 3}

Observe that $\bQ \ge \bo$ so that $(\bK+\bQ)^{-1} \le \bK^{-1}$ and thus
\bal
\bK^{-1}\otimes \bK^{-1}-(\bK+\bQ)^{-1}\otimes (\bK+\bQ)^{-1} \ge \bo
\eal
(this follows from the properties of Kronecker products, see e.g. \cite{Zhang}) and
\bal \notag
(1+t^{-1}) \bK^{-1}\otimes \bK^{-1} &-(\bK+\bQ)^{-1}\otimes (\bK+\bQ)^{-1}\\
& \ge t^{-1}\bK^{-1}\otimes \bK^{-1} > \bo
\eal
Now consider the following quadratic form for any $\by\neq \bo$:
\bal
\by'\nabla_{yy}^2 f_t \by =  \tilde{\by}'((1 &+ t^{-1}) \bK^{-1}\otimes \bK^{-1}\\  \notag
& - (\bK+\bQ)^{-1}\otimes (\bK+\bQ)^{-1})\tilde{\by}
> 0
\eal
since $\tilde{\by} = \widetilde{\bD}_n\by \neq 0$ (this follows from the fact that all columns of $\widetilde{\bD}_n$ are linearly independent, which in turn is implied by linear independence of columns of $\bD_n$ since it has a full column rank \cite{Magnus}). Therefore, $\nabla_{yy}^2 f_t > \bo$.
Non-singularity of $\nabla_{xx}^2 f_t$ can be proved in a similar way. First, one observes that $\bW \ge \bW_2$:
\bal
\bW &= \bH'\bK^{-1}\bH \\
&= [\bH_1' \bH_2']
\left[
\begin{array}{cc}
   \bI & \bK_{21}' \\
   \bK_{21} & \bI \\
\end{array}
\right]^{-1}
\left[
\begin{array}{cc}
   \bH_1 \\
   \bH_2 \\
\end{array}
\right] \\  \notag
\label{eq.Lemma3.4.3}
&= \bH_2'\bH_2 + (\bH_1-\bK_{21}'\bH_2)'(\bI-\bK_{21}'\bK_{21})^{-1}\\
&\qquad\qquad\qquad\times (\bH_1-\bK_{21}'\bH_2)\\
&\ge \bH_2'\bH_2 = \bW_2
\eal
since 2nd term in \eqref{eq.Lemma3.4.3} is positive semi-definite, where we have used the matrix inversion Lemma:
\bal
\bK^{-1} &=
\left[
\begin{array}{cc}
  \bI & \bK_{21}'\\
  \bK_{21} & \bI\\
\end{array}
\right]^{-1}\\ \notag
&=
\left[
\begin{array}{cc}
  (\bI-\bK_{21}'\bK_{21})^{-1} & \bK_{21}' (\bK_{21}\bK_{21}'-\bI)^{-1} \\
  (\bK_{21}\bK_{21}'-\bI)^{-1}\bK_{21} & (\bI-\bK_{21}\bK_{21}')^{-1}\\
\end{array}
\right]
\eal
and the fact that $\bK_{21}'\bK_{21} <\bI$,  $\bK_{21}\bK_{21}' <\bI$, which follows from $\bK>\bo$ (since this implies $|\bK_{21}|_2 <1$, where $| \cdot |_2$ is the spectral norm, see e.g. \cite{Zhang}). Therefore, $\bZ_1 \ge \bZ_2$, which follows from the following argument when $\bW,\ \bW_2$ are non-singular:
\bal
\bW \ge \bW_2 &\Rightarrow \bW^{-1} \le \bW_2^{-1}\\
&\Rightarrow \bW^{-1}+\bR \le \bW_2^{-1}+\bR \\  \notag
&\Rightarrow \bZ_1=(\bW^{-1}+\bR)^{-1}\\
 & \qquad \ \ \ge (\bW_2^{-1}+\bR)^{-1}=\bZ_2
\eal
When $\bW$ and/or $\bW_2$ are singular, one can use the continuity argument \cite{Zhang}: use $\bW_{\epsilon}=\bW+\epsilon\bI >\bo$, $\bW_{2\epsilon}=\bW_2+\epsilon\bI >\bo$ with $\epsilon>0$, instead of $\bW,\ \bW_2$ and then take $\epsilon \rightarrow 0$; since both sides of the inequality are continuous functions, the result follows. Since $\bZ_1 \ge \bZ_2$, it follows that  $\bZ_1\otimes\bZ_1 \ge \bZ_2\otimes\bZ_2$ and thus
\bal
\bZ_1\otimes\bZ_1 - \bZ_2\otimes\bZ_2 + t^{-1}\bR^{-1}\otimes\bR^{-1} > \bo
\eal
(since $\bR^{-1}\otimes\bR^{-1} > \bo$) from which it follows that $\nabla_{xx}^2 f_t < \bo$.

%\vspace*{0.5\baselineskip}

%

\end{document}